\newcommand{\C}{\mathbb{C}} 
\newcommand{\R}{\mathbb{R}} \newcommand{\N}{\mathbb{N}}
\renewcommand{\S}{\mathbb{S}}
\newcommand{\s}{\mathrm{S}}
\newcommand{\Z}{\mathbb{Z}}
\newcommand{\HB}{\mathbb{H}}
\newcommand{\I}{{\cal I}_{1/2}}
\renewcommand{\d}{\mathrm{d}}
\newcommand{\scrh}{{\mathscr H}}
\newcommand{\scrs}{{\mathscr S}}
\newcommand{\tilt}{\tilde{t}}
\newcommand{\hlm}{{\scrh}^L_{r_{_-}}}
\newcommand{\hrm}{{\scrh}^R_{r_{_-}}}
\newcommand{\hlp}{{\scrh}^L_{r_{_+}}}
\newcommand{\hrp}{{\scrh}^R_{r_{_+}}}
\newcommand{\sld}{\slashed{D}}
\newtheorem{definition}{Definition}[section]
\newtheorem{theorem}{Theorem}[section]
\newtheorem{proposition}{Proposition}[section]
\newtheorem{lemma}{Lemma}[section]
\newtheorem{remark}{Remark}[section]
\begin{document}
\mbox{} \thispagestyle{empty}

\begin{center}
\bf{\huge Scattering theory for Dirac fields inside a Reissner-Nordström-type black hole}

\vspace{0.1in}

{Dietrich HÄFNER$^{\textcolor{red}{1}}$, Mokdad MOKDAD\footnote{Université Grenoble Alpes, Institut Fourier, 100 rue des Maths, 38610 Gières, France \\
E-mails: Dietrich.Hafner@univ-grenoble-alpes.fr, Mokdad.Mokdad@univ-grenoble-alpes.fr.} \& Jean-Philippe NICOLAS\footnote{LMBA, UMR CNRS 6205, Université de Brest, 6 avenue Victor Le Gorgeu, 29200 Brest, France \\
E-mail: Jean-Philippe.Nicolas@univ-brest.fr.}}
\end{center}

\vspace{0.5cm}

{\bf Abstract.}  We show asymptotic completeness for the massive charged Dirac equation between the black hole and Cauchy horizons of a sub-extremal ((Anti-) De Sitter) Reissner-Nordstr\"om black hole. 

\vspace{0.5cm}

{\bf Keywords.} Scattering theory, Black hole interior, Reissner-Nordström metric, Dirac equation.

\vspace{0.1in}

{\bf Mathematics subject classification.} 35Q41, 35Q75, 83C57.

\tableofcontents

\section{Introduction}

The aim of this paper is to show how time dependent methods apply to scattering theory in black hole interiors. Scattering theory has now a long tradition in General Relativity, but concerns mainly field equations outside black holes, see \cite{DK}, \cite{Ba}, \cite{HN}, \cite{Ni2016}, \cite{GGH}, \cite{DRSR}, \cite{Mo2} and references therein for an overview. Time decaying situations are in most cases easier than time independent situations. Whereas the decay in time of the hamiltonian towards some simplified hamiltonian is in time dependent situations clear right from the beginning, in time independent situations it must be established via so called propagation estimates and holds only along the evolution. We consider in this paper the massive charged Dirac equation on the ((Anti)-De Sitter-)Reissner-Norstr\"om metric in the sub-extremal case. The Dirac equation has the advantage that it possesses a conserved positive quantity which makes the application of time dependent methods particularly convenient. Our main result is an asymptotic completeness result between the black hole horizon and the Cauchy horizon. The charge entails some additional technical difficulties as we have to work in a different gauge near each horizon. Our result is formulated in terms of wave operators as well as in terms of existence and uniqueness of solutions of a characteristic Cauchy problem. 

Asymptotic completeness for the wave equation inside a Reissner-Nordstr\"om black hole was established by C. Kehle and Y. Shlapentokh-Rothman in \cite{KeShla} using Fourier analysis. They build up on earlier work by
S. Chandrasekhar and J.B. Hartle \cite{ChaHa}. One of the main motivations of these works is to show the instability of the Cauchy horizon. The lack of transverse regularity of the scalar field at the Cauchy horizon is established in both \cite{ChaHa} and \cite{KeShla}. Blue shift instability of the field near the Cauchy horizon was first observed numerically in the founding work of  R. Penrose and M. Simpson \cite{PeSi}. In order to establish the lack of transverse regularity, it seems essential to have a spectral decomposition of the scattering matrix, which, in our case, could be done a posteriori. These questions are left for future work.

There is no timelike Killing vector between the black hole horizon and the Cauchy horizon and the situation is therefore time dependent. Asymptotic completeness can then be established for the Dirac equation essentially by Cook's method using the conservation of the $L^2$ norm. The Killing vector fields commute with the equation and span a spacelike distribution that is integrable and forms the foliation by the Cauchy hypersurfaces of constant $r$. As a consequence, the different wave and scattering operators preserve tangent regularity.

The paper is organised as follows. In Section \ref{GeometricSetting}, we present a general geometric setting that covers the interiors of sub-extremal Reissner-Nordström black holes as well as their de Sitter and anti-de Sitter analogues. Section \ref{DiracEquation} describes the charged Dirac equation with its general properties (gauge freedom in the charged case, conserved current), we derive an explicit expression using the Newman-Penrose formalism and we make some simplifications to the equation by incorporating the $3$-volume density into the spinor. Then we define the evolution system and recall its essential features. Section \ref{ScatteringTheory} is devoted to the scattering theory and its geometrical interpretation is given in Section \ref{GeometricalInterpretation}. In an appendix, we derive the conditions on the physical parameters of the anti-de Sitter-Reissner-Nordström metric that correspond to the different configurations of horizons, among which, the sub-extremal family.

Note that the present scattering theory could be constructed in an alternative geometrical manner using the approach known as conformal scattering (see \cite{MaNi2004,Ni2016,Mo2}), although no conformal rescaling is necessary here. This consists in pushing further what is done in the geometrical interpretation of Section \ref{GeometricalInterpretation}. Once the trace operators are constructed, we could prove directly their invertibility by solving the Goursat problem at the bifurcate horizons. This can be done following a similar approach to that in M. Mokdad \cite{Mo2} and will be the object of a subsequent paper.

\vspace{0.1in}

\noindent{\bf Acknowledgements.} D. Häfner and J.-P. Nicolas acknowledge support from the ANR funding ANR-16-CE40-0012-01. 
\vspace{0.1cm}

\noindent{\bf Notations and conventions.}
\begin{itemize}
\item Throughout the paper, we use the formalisms of abstract indices, $2$-component spinors and Newman-Penrose. Recall that a Newman-Penrose tetrad on a Lorentzian $4$-manifold $(\mathcal{M},g)$ is a set of four null vector fields $\{ l^a,n^a,m^a,\bar{m}^a\}$ that are a local basis of the complexified tangent bundle of $\mathcal{M}$. It is said to be normalized if
\[ l_an^a = -m_a\bar{m}^a = 1 \, .\]
If we assume that $\mathcal{M}$ is globally hyperbolic, then it admits a spin-structure and we denote by $\S^A$ and $\S^{A'}$ the bundles of left and right spinors over $\mathcal{M}$ and by $\S_A$ and $\S_{A'}$ their respective dual bundles. They satisfy
\[ \S^A \otimes \S^{A'} = T^a\mathcal{M} \otimes \C \]
and are endowed with symplectic forms $\varepsilon_{AB}$ and $\varepsilon_{A'B'}$ such that $g_{ab} = \varepsilon_{AB} \varepsilon_{A'B'}$. The bundle of Dirac spinors is given by $\S_A \oplus \S^{A'}$.

A spin-dyad, is a local basis $\{ o^A , \iota^A \}$ of $\S^A$. It is normalized if $o_A \iota^A =1$ and it is then called a spin-frame. The components of a spinor field in a spin-frame $\{ o^A , \iota^A\}$ are given as follows (see the part on spinor bases in Section 2.5 of \cite{PeRi} Vol. I): for a spinor field $\phi_A \in \Gamma (\S_A )$ (where $\Gamma (\S_A )$ denotes the space of sections of the spin-bundle $\S_A$), we have
\[ \phi_0 = \phi_A o^A \, ,~ \phi_1 = \phi_A \iota^A \]
and for $\chi^{A'} \in \Gamma (\S^{A'})$,
\[ \chi^{0'} = -\bar{\iota}_{A'} \chi^{A'} \, ,~ \chi^{1'} = \bar{o}_{A'} \chi^{A'} \, . \]
To a normalised Newman-Penrose tetrad, there corresponds a (unique modulo an overall sign) spin-frame such that
\begin{equation} \label{NPTSF}
l^a = o^A \bar{o}^{A'} \, ,~ n^a = \iota^A \bar{\iota}^{A'} \, ,~ m^a = o^A \bar{\iota}^{A'} \, ,~ \bar{m}^a = \iota^A \bar{o}^{A'} \, .
\end{equation}
\item Given $\cal M$ a smooth manifold, $\cal O$ an open set of $\cal M$ and $F$ a fiber bundle over $\mathcal{M}$, we denote by ${\cal C}^\infty_0 ({\cal O}\, ;~ F)$ the space of smooth sections of $F$ with compact support in ${\cal O}$.
\item Given a coordinate system and $x$ one of the variables, we shall denote
\[ D_x := \frac{1}{i} \frac{\partial}{\partial x}\, .\]
\end{itemize}

\section{Geometric setting} \label{GeometricSetting}

We work on a spacetime ${\cal M} = \R_t \times \R_x \times \s^2_\omega$ equipped with the metric
\begin{equation} \label{Metric}
g = f(r) \left( \d t^2 - \d x^2 \right) - r^2 \d \omega^2
\end{equation}
where $\d\omega^2$ is the euclidean metric on the unit $2$-sphere and $r$ is an implicit function of $t$ alone defined by the following requirements:
\begin{description}
\item[(H1)] the smooth function $f$ is defined on an interval $[r_- , r_+]$, $0<r_-<r_+<+\infty$, is positive on $]r_-,r_+[$ and has simple zeros at $r_\pm$;
\item[(H2)] $\frac{\d t }{\d r} = -\frac{1}{f(r)}$ with $t=0$ for an arbitrarily chosen value of $r$ in $]r_-,r_+[$, say $r_m=(r_-+r_+)/2$.
\end{description}
This implies that $r$ is a smooth, strictly decreasing function of $t$ on $\R$ and $r\rightarrow r_\mp$ as $t\rightarrow \pm \infty$. Moreover, we can obtain an equivalent of $r-r_\mp$ as $r\rightarrow r_\mp$. Let us do this for $r\rightarrow r_-$. We can write $f$ as
\[ f(r) = q(r) (r-r_-)\]
with $q \in {\cal C}^\infty ([r_-,r_+])$. Therefore,
\[ \frac{1}{f(r)}=\frac{c}{r-r_-} + \frac{d(r)}{q(r)} \]
where
\[ c = \frac{1}{q(r_-)} = \frac{1}{f'(r_-)} \, ,~ d(r) = \frac{1-cq(r)}{r-r_-} \in {\cal C}^\infty ([r_-,r_+]) \, ,\]
$f'$ denoting the derivative of $f$ with respect to $r$. It follows that
\[ t = -c \log (r-r_-) - \int_{r_m}^r \frac{d(s)}{q(s)} \d s + c \log (r_m-r_-) \, ,\]
whence
\[ r-r_- = (r_m-r_-)e^{-t/c}e^{-\frac{1}{c} \int_{r_m}^r \frac{d(s)}{q(s)} \d s} \, .\]
We therefore have
\begin{equation} \label{requivrpm}
r-r_\pm= \mp e^{2 \kappa_\pm t + w_\pm(r)}  \mbox{ as } t \rightarrow \mp \infty \, ,
\end{equation}
where $w_\pm \in {\cal C}^\infty ([r_-,r_+])$ and
\[ \kappa_\pm = -\frac12 f'(r_\pm) \]
are the surfaces gravities at the inner and outer horizons. Note that $\kappa_- <0$ and $\kappa_+ >0$. 

We denote by $\Sigma_t$ the level hypersurfaces of the time function $t$, i.e.
\[ \Sigma_t = \{ t \} \times \Sigma \, ,~ \Sigma = \R_x \times \s^2_\omega \, .\]
We choose the time orientation given by the timelike vector field $\frac{\partial}{\partial t}$.
 
Introducing the Eddington-Finkelstein variables
\begin{equation} \label{EFV}
u = t-x \, ,~ v = t+x \, ,
\end{equation}
the metric can be written in terms of the coordinates systems $(u,r,\omega)$ and $(v,r,\omega)$ and it has the same expression in both:
\begin{eqnarray}
g &=& - f(r) \d u^2 - 2 \d u \d r - r^2 \d \omega^2 \, , \label{REFMet}\\
 &=& - f(r) \d v^2 - 2 \d v \d r - r^2 \d \omega^2 \, . \label{AEFMet}
\end{eqnarray} 
Another useful form of the metric is in the coordinate system $(u,v,\omega )$
\begin{equation} \label{uvMet}
g = f(r) \d u \d v - r^2 \d \omega^2 \, .
\end{equation}
The expressions \eqref{REFMet} and \eqref{AEFMet} show that $g$ extends as a smooth non-degenerate metric to $[r_-,r_+]_r \times \R_u \times \s^2_\omega$ and to $[r_-,r_+]_r \times \R_v \times \s^2_\omega$, allowing us to glue the regions $\{ r=r_\pm\}$ each as a pair of smooth null boundaries, referred to respectively as the left and right inner and outer horizons:
\begin{eqnarray*}
\hlm &:=& \{ r=r_-\} \times \R_v \times \s^2_\omega \, ,\\
\hrm &:=& \{ r=r_-\} \times \R_u \times \s^2_\omega \, ,\\
\hlp &:=& \{ r=r_+\} \times \R_u \times \s^2_\omega \, ,\\
\hrp &:=& \{ r=r_+\} \times \R_v \times \s^2_\omega \, .
\end{eqnarray*}
The two components of each horizon meet in a smooth $2$-sphere, which we denote respectively by $\scrs_{r_+}$ for the outer horizon and $\scrs_{r_-}$ for the inner horizon. This can be seen using Kruskal-Szekeres-type coordinates. For the construction of $\scrs_{r_-}$, we put (recall that $\kappa_- <0$)
\[ T = -\frac{1}{2} e^{\kappa_- t} \left( e^{\kappa_-x} +
e^{-\kappa_-x} \right) \, , ~ X = -\frac{1}{2} e^{\kappa_-t}
\left( e^{\kappa_-x} - e^{-\kappa_-x} \right) \, . \]
In these coordinates, the metric $g$ reads
\[ g = \frac{f(r) e^{-2\kappa_-  t}}{\kappa_-^2} (\d T^2 - \d X^2) - r^2 \d \omega^2 \]
which, by \eqref{requivrpm}, extends smoothly and non degenerately to $\{ T=X \leq 0 \} \cup \{ T=-X \leq 0\}$ and restricts to $\{ T=X=0 \}$ as $-r_-^2 \d \omega^2$. The inner crossing sphere is then defined as
\[ \scrs_{r_-} = \{(0,0)\}_{(T,X)} \times \s^2_\omega \, .\]
It is reached as $v \rightarrow +\infty$ along $\hlm$, as $u \rightarrow +\infty$ along $\hrm$ and going towards the future along all lines of fixed $x$ and $\omega$. A similar choice of variables can be introduced to define the outer crossing sphere $\scrs_{r_+}$. See Figure \ref{PenroseDiagram} for a Penrose diagram of the spacetime $\mathcal{M}$.
\begin{figure}
\centering
\includegraphics[scale=1.1]{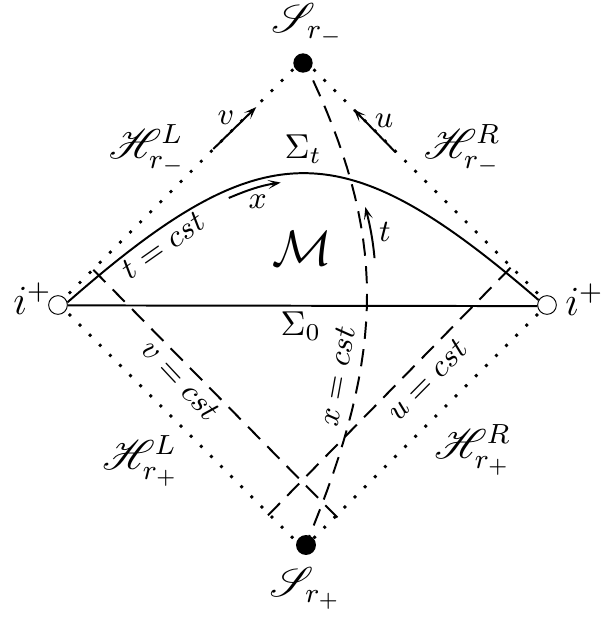}
\caption{\emph{A Penrose diagram of $\mathcal{M}$.}}
\label{PenroseDiagram}
\end{figure}
The future boundary of $\mathcal{M}$ is the set
\[ \{ r=r_- \} = \hlm \cup \scrs_{r_-} \cup \hrm \]
and its past boundary is
\[ \{ r=r_+ \} = \hlp \cup \scrs_{r_+} \cup \hrp \, . \]
The boundary of $\mathcal{M}$ misses two ``points'', denoted $i^+$ on the picture, by reference to future timelike infinity in the three physical cases on which this model is based: the sub-extremal Reissner-Nordström black hole and its de Sitter and anti-de Sitter generalisations. In all three cases, the metric is given on $\R_{\tilde{t}} \times ]0,+\infty [_r \times \s^2_\omega$ by
\begin{equation} \label{DSRNMet}
g = F \d \tilt^2 - \frac{1}{F} \d r^2 - r^2 \d \omega^2 \, ,~ F = 1 - \frac{2M}{r} + \frac{Q^2}{r^2} - \Lambda r^2 \, ,
\end{equation}
where $M$ and $Q$ are the mass and charge of the black hole and $\Lambda \in \R$ is the cosmological constant. We place ourselves between two consecutive simple horizons, corresponding to $0<r_-<r_+$, such that their respective surface gravities satisfy $\kappa_- < 0 < \kappa_+$. For $\Lambda = 0$ (Reissner-Nordström), this requires $0<\vert Q \vert <M$ and we have $r_\pm = M \pm \sqrt{M^2-Q^2}$. For $\Lambda >0$ (de Sitter-Reissner-Nordström), the conditions for having three simple horizons or one double and one simple horizon are detailed in M. Mokdad \cite{Mo1} (note that there is a similar study for de Sitter-Kerr black holes due to J. Borthwick \cite{Bo}). For $\Lambda <0$ (anti-de Sitter-Reissner-Nordström), the conditions are simpler and their derivation is the object of the appendix.

Our spacetime is $(\mathcal{M}, g)$, $\mathcal{M} = \R_{\tilde{t}} \times ]r_- , r_+ [_r \times \s^2_\omega$. The function $F$ is negative on $]r_-,r_+[$ and we introduce the Regge-Wheeler variable $r_*$ defined by
\[ \frac{\d r_*}{\d r} = \frac{1}{F} \mbox{ and } r_* =0 \mbox{ for, say, } r= \frac{r_-+r_+}{2} \, .\]
Then $r_*$ is an analytic diffeomorphism from $]r_-,r_+[$ onto $\R$ and $r \rightarrow r_\mp$ corresponds to $r_* \rightarrow \pm \infty$. Therefore, ${\mathcal M}$ can be described as $\R_{\tilt} \times \R_{r_*} \times \s^2_\omega$ with the metric
\[ g = F \left( \d \tilt^2 - \d r_*^2 \right) - r^2 \d \omega^2 \, . \]
Putting $t=r_*$, $x={\tilt}$, $f(r) = -F(r)$, we obtain the metric \eqref{Metric} and $f$ satisfies {\it (H1)-(H2)}.

\section{The charged massive Dirac equation} \label{DiracEquation}

The charged massive Dirac equation reads
\begin{equation} \label{DirEq}
\begin{cases} {\big( \nabla^{AA'} - iq A^{AA'} \big) \phi_A } & = {\frac{m}{\sqrt{2}} \chi^{A'} \, ,} \\ {\big( \nabla_{AA'} - i q A_{AA'} \big) \chi^{A'}} &= - {\frac{m}{\sqrt{2}} \phi_A \, ,} \end{cases}
\end{equation}
where
\begin{equation} \label{EMPot}
A_a \d x^a = \frac{Q}{r} \d x
\end{equation}
is the electrostatic potential of the ambiant electromagnetic field of the spacetime.

Equation \eqref{DirEq} is a hyperbolic equation and we work on a globally hyperbolic spacetime. Therefore, Leray's theorems \cite{Le} imply:
\begin{lemma} \label{LerayCP}
For any $s\in \R$, for any smooth compactly supported data $(\alpha_A \, ,~ \beta^{A'})$ on $\Sigma_s$, \eqref{DirEq} admits a unique solution $(\phi_A , \chi^{A'}) \in {\cal C}^\infty ({\mathcal M}\, ;~  \S_A \oplus \S^{A'} )$ such that
\[ \phi_A \vert_{t=s} = \alpha_A \, ,~ \chi^{A'} \vert_{t=s} = \beta^{A'} \, .\]
\end{lemma}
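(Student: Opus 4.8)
The plan is to invoke Leray's general theory of hyperbolic systems, so the real work is to recast the first-order spinor system \eqref{DirEq} into a form to which that theory directly applies, and to verify the hypotheses (global hyperbolicity, smooth coefficients, a symmetric-hyperbolic or Leray-hyperbolic structure). First I would express \eqref{DirEq} as a linear first-order system on the Dirac bundle $\S_A \oplus \S^{A'}$ by pairing the two equations into a single operator acting on the Dirac spinor $(\phi_A,\chi^{A'})$. Concretely, one contracts the spinorial covariant derivatives $\nabla^{AA'}$ with the tetrad-induced Infeld--van der Waerden symbols to obtain the principal part; since $\nabla^{AA'}\phi_A$ has principal symbol given by contraction with $g_{ab}$ (recall $\S^A\otimes\S^{A'}=T^a\mathcal{M}\otimes\C$), the characteristic variety of the resulting second-order operator is exactly the null cone of $g$, which is the defining feature of a hyperbolic equation in Leray's sense.

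Next I would check that the lower-order terms, namely the mass coupling $\tfrac{m}{\sqrt 2}$ and the electromagnetic coupling through $A_{AA'}$, are genuinely lower order and have smooth coefficients on all of $\mathcal{M}$. The potential $A_a\,\d x^a = \tfrac{Q}{r}\,\d x$ from \eqref{EMPot} is smooth because $r$ ranges in $]r_-,r_+[$ and the associated spin-connection coefficients are smooth in the chosen frame; hence the full operator differs from its principal part by a smooth bundle endomorphism and does not affect the characteristic cone. Having identified the principal symbol with the null cone and confirmed that $(\mathcal{M},g)$ is globally hyperbolic (with the $\Sigma_t$ serving as a foliation by Cauchy hypersurfaces, as established in the geometric setting), the system satisfies the hypotheses of Leray's theorem \cite{Le}, which yields a unique smooth global solution in $\mathcal{C}^\infty(\mathcal{M};\S_A\oplus\S^{A'})$ for the prescribed smooth compactly supported Cauchy data on $\Sigma_s$.

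The main obstacle I anticipate is purely technical rather than conceptual: verifying that the first-order system is hyperbolic in the precise formulation Leray requires, i.e. either exhibiting an explicit symmetric-hyperbolic form (a positive-definite symmetrizer compatible with the timelike $\partial_t$) or checking the Leray--Gårding-type energy inequalities. One packages this by writing the system in the local coordinates $(t,x,\omega)$, replacing the covariant derivative by the partial derivative plus the smooth connection term, and confirming that the coefficient matrix of $\partial_t$ is invertible while the symbol of the spatial part is symmetrizable; the conserved positive $L^2$ quantity advertised in the introduction is precisely what furnishes the energy current making the symmetrizer manifest. Uniqueness follows from the same energy estimate applied to the difference of two solutions, and the support of the data together with finite propagation speed (a standard consequence of hyperbolicity) guarantees the solution is well defined globally. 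Since all these ingredients are either classical or already in place from the geometric setting, the proof reduces to citing \cite{Le} once the hyperbolic structure has been made explicit.
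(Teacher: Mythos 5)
Your proposal is correct and follows essentially the same route as the paper, which simply observes that \eqref{DirEq} is a hyperbolic system with smooth coefficients on the globally hyperbolic spacetime $(\mathcal{M},g)$ (the potential being smooth since $r$ stays in $]r_-,r_+[$) and invokes Leray's theorems \cite{Le}. Your additional verifications (symmetric-hyperbolic structure, characteristic cone, energy estimate for uniqueness) are exactly the details the paper leaves implicit.
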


\subsection{Gauge freedom}

Equation \eqref{DirEq} has a natural gauge freedom that can be described as follows. Let $p$ be a smooth scalar function on $\mathcal{M}$, we put
\begin{equation}\label{gauge}
\tilde{A} = A + \d p \, ,~ (\tilde{\phi}_A , \tilde{\chi}^{A'} ) = e^{iqp} (\phi_A , \chi^{A'} )\, ,
\end{equation}
then $(\phi_A , \chi^{A'} )$ is a solution to the charged massive Dirac equation \eqref{DirEq} if and only if $(\tilde{\phi}_A , \tilde{\chi}^{A'} )$ satisfies
\begin{equation} \label{GTDirEq}
\begin{cases} {\big( \nabla^{AA'} - iq \tilde{A}^{AA'} \big) \tilde{\phi}_A } & = {\frac{m}{\sqrt{2}} \tilde{\chi}^{A'} \, ,} \\ {\big( \nabla_{AA'} - i q \tilde{A}_{AA'} \big) \tilde{\chi}^{A'}} &= - {\frac{m}{\sqrt{2}} \tilde{\phi}_A \, .} \end{cases}
\end{equation}
Indeed we have
\begin{eqnarray*}
\big( \nabla^{AA'} - iq \tilde{A}^{AA'} \big) \tilde{\phi}_A &=& e^{iqp} \big(\nabla^{AA'} \phi_A + iq (\nabla^{AA'} p) \phi_A - iq A^{AA'} \phi_A - i q (\nabla^{AA'} p) \phi_A\big) \\
&=& e^{iqp} \big(\nabla^{AA'} \phi_A - iq A^{AA'} \phi_A \big) = e^{iqp}\frac{m}{\sqrt{2}} \chi^{A'} = \frac{m}{\sqrt{2}} \tilde{\chi}^{A'} \, .
\end{eqnarray*}
The second equation is treated similarly.

This gauge freedom is particularly useful with regard to the singularity of the electromagnetic potential \eqref{EMPot} at the horizons, due to the presence of the $1$-form $\d x$. The singularity can be removed at one horizon at a time by adding a constant $1$-form to the potential and compensating for it using \eqref{gauge}. Namely, we put
\begin{equation} \label{Gauger-}
\tilde{A} = Q \left( \frac{1}{r} - \frac{1}{r_-} \right) \d x \, ,~ (\tilde{\phi}_A , \tilde{\chi}^{A'} ) = e^{-i\frac{qQ}{r_-}x} (\phi_A , \chi^{A'} )\, .
\end{equation}
This gauge choice gives an electromagnetic potential that is smooth at $\{ r=r_-\}$ because $r-r_-$ decays exponentially fast there. This can be seen explicitely by looking at the metric dual of $\tilde{A}$:
\[ g^{-1} (\tilde{A}) = \frac{Q}{rr_-}\frac{r-r_-}{f} \frac{\partial}{\partial x} \, .\]
Since $\partial_x$ extends smoothly to the horizons as a non-trivial null generator that vanishes at the crossing spheres $\scrs_{r_\pm}$, this vector field is smooth at $\{ r=r_- \}$, but not at the other horizon. The same procedure can be applied for $\{ r=r_+\}$.

\subsection{Conserved current, unitary evolution}

The causal vector field
\begin{equation} \label{Current}
J^a = \phi^A \bar{\phi}^{A'} + \bar{\chi}^A \chi^{A'}
\end{equation}
is a conserved current for Equation \eqref{DirEq}, i.e.
\begin{equation} \label{DivFree}
\nabla^a J_a = 0 \, .
\end{equation}
Indeed, this is an easy consequence of the equation itself:
\begin{eqnarray*}
\nabla^a J_a &=& \nabla^{AA'} \left( \phi_A \bar{\phi}_{A'} + \bar{\chi}_A \chi_{A'} \right) \\
&=& 2 \Re \left( (\nabla^{AA'} \phi_A ) \bar{\phi}_{A'} + (\nabla^{AA'} \chi_{A'} ) \bar{\chi}_A\right) \\
&=& 2 \Re \left( (\nabla^{AA'} \phi_A ) \bar{\phi}_{A'} + (\nabla_{AA'} \chi^{A'} ) \bar{\chi}^A\right) \\
&=& 2 \Re \left( \left( iq A^{AA'} \phi_A + \frac{m}{\sqrt{2}} \chi^{A'} \right) \bar{\phi}_{A'} + \left( i q A_{AA'} \chi^{A'} - \frac{m}{\sqrt{2}} \phi_A \right)  \bar{\chi}^A \right) \\
&=& 2 \Re \left( iq A^{AA'} \phi_A  \bar{\phi}_{A'} +  i q A_{AA'} \chi^{A'} \bar{\chi}^A + \frac{m}{\sqrt{2}} \chi^{A'} \bar{\phi}_{A'} - \frac{m}{\sqrt{2}} \overline{\chi^{A'} \bar{\phi}_{A'}} \right) =0 \, ,
\end{eqnarray*}
since $A^a$ and $m$ are real.

The flux of $J^a$ across a hypersurface $\Sigma_t $ defines the natural $L^2$ norm for $\phi_A \oplus \chi^{A'}$. This can be seen by choosing a normalized Newman-Penrose tetrad $\{ l , n ,m , \bar{m} \}$ on ${\mathcal M}$ that is adapted to the foliation, i.e. such that $l^a + n^a$ is equal to $\sqrt{2}$ times the future-oriented unit normal to $\Sigma_t$, i.e.
\begin{equation} \label{AdaptedTetrad}
l^a+n^a = \sqrt{2} \, \nu^a \, ,~ \nu^a \frac{\partial}{\partial x^a} = \frac{1}{\sqrt{f}}\, \frac{\partial}{\partial t} \, .
\end{equation}
More precisely, we choose
\begin{equation} \label{NPTetrad}
\begin{cases} {l^a \partial_a }&= {\frac{1}{\sqrt{2f}} \big( \frac{\partial}{\partial t} + \frac{\partial}{\partial x} \big) = \sqrt{\frac{2}{f}} \frac{\partial}{\partial v} \, ,}\\
{n^a \partial_a} & = {\frac{1}{\sqrt{2f}} \left( \frac{\partial}{\partial t} - \frac{\partial}{\partial x} \right) = \sqrt{\frac{2}{f}} \frac{\partial}{\partial u} \, ,}\\
{m^a \partial_a} &= {\frac{1}{\sqrt{2}r} \left( \frac{\partial}{\partial \theta} + \frac{i}{\sin \theta} \frac{\partial}{\partial \varphi} \right) \, ,}\\
{\bar{m}^a \partial_a} &= {\frac{1}{\sqrt{2}r} \left( \frac{\partial}{\partial \theta} - \frac{i}{\sin \theta} \frac{\partial}{\partial \varphi} \right) \, .} \end{cases}
\end{equation}
Considering the spin-frame $\{ o^A , \iota^A \}$ related to the tetrad \eqref{NPTetrad} by \eqref{NPTSF}, \eqref{AdaptedTetrad} implies
\begin{equation} \label{FluxSigmat}
\int_{\Sigma_t} J_a \nu^a \mathrm{dVol}_{\Sigma_t} = \frac{1}{\sqrt{2}} \int_{\Sigma_t} \left( \vert \phi_0 \vert^2 + \vert \phi_1 \vert^2 + \vert \chi^{0'} \vert^2 + \vert \chi^{1'} \vert^2 \right) \mathrm{dVol}_{\Sigma_t} \, ,
\end{equation}
where $\mathrm{dVol}_{\Sigma_t}$ is the $3$-volume measure associated to the induced metric on $\Sigma_t$
\begin{equation} \label{dVolSigmat}
\mathrm{dVol}_{\Sigma_t} = \sqrt{f} \,r^2 \d x \d \omega \, ,
\end{equation}
with $\d \omega$ the Lebesgue measure on the euclidean $\s^2$. We define the space ${\cal H}_t$ for each $t\in \R$ by
\begin{equation} \label{SpaceHt}
{\mathcal H}_t := L^2 (\Sigma_t \, ;~ \S_A \oplus \S^{A'} ) \, ,~ \Vert (\phi , \chi ) \Vert^2_{{\mathcal H}_t} = \int_{\Sigma_t} J_a \nu^a \mathrm{dVol}_{\Sigma_t} \, .
\end{equation}
By density, Lemma \ref{LerayCP} together with the divergence theorem, using finite propagation speed and the conservation law \eqref{DivFree}, imply
\begin{lemma}\label{L2CP}
For any $s\in \R$, for any initial data $(\alpha_A \, ,~ \beta^{A'}) \in L^2 (\Sigma_s \, ;~  \S_A \oplus \S^{A'} )$, there exists a unique solution $(\phi_A , \chi^{A'}) \in {\cal C}^0 (\R_t \, ;~ L^2 (\Sigma_t \, ;~  \S_A \oplus \S^{A'} ))$ of \eqref{DirEq} such that
\[ \phi_A \vert_{t=s} = \alpha_A \, ,~ \chi^{A'} \vert_{t=s} = \beta^{A'} \, .\]
Moreover, for all $t \in \R$ we have
\[ \Vert (\phi , \chi ) \Vert_{{\mathcal H}_t} = \Vert (\alpha , \beta ) \Vert_{{\mathcal H}_s} \, .\]
\end{lemma}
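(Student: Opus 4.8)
The plan is to follow the route indicated just before the statement: prove everything first for smooth, compactly supported data using Lemma~\ref{LerayCP}, establish the norm identity by a divergence-theorem argument, and then pass to arbitrary $L^2$ data by density. So I would start from data $(\alpha_A,\beta^{A'}) \in {\cal C}^\infty_0(\Sigma_s;\S_A \oplus \S^{A'})$, for which Lemma~\ref{LerayCP} furnishes a unique smooth solution $(\phi_A,\chi^{A'})$ on $\mathcal{M}$. I would then apply the divergence theorem to the conserved current $J^a$ of \eqref{Current} on the spacetime slab bounded by $\Sigma_s$ and $\Sigma_t$ (say $s<t$). By \eqref{DivFree} the bulk integral of $\nabla^a J_a$ vanishes, so only boundary contributions survive. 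Finite propagation speed --- valid since \eqref{DirEq} is a first-order symmetric hyperbolic system and the $\Sigma_t$ are the spacelike level sets of $t$ --- guarantees that the solution has spatially compact support on each slice, so the lateral boundary terms vanish. The two remaining terms are the fluxes of $J^a$ across $\Sigma_s$ and $\Sigma_t$, which by the identity \eqref{FluxSigmat} equal $\Vert(\alpha,\beta)\Vert^2_{{\cal H}_s}$ and $\Vert(\phi,\chi)\Vert^2_{{\cal H}_t}$ respectively. With the outward orientations (past normal $-\nu$ on $\Sigma_s$, future normal $+\nu$ on $\Sigma_t$) this yields $\Vert(\phi,\chi)\Vert_{{\cal H}_t} = \Vert(\alpha,\beta)\Vert_{{\cal H}_s}$ for all $t$, for smooth compactly supported data.

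Next, identifying each slice $\Sigma_t = \{t\}\times\Sigma$ with the fixed manifold $\Sigma$, the solution map $U(t,s):(\alpha,\beta)\mapsto (\phi,\chi)\vert_{\Sigma_t}$ is, by the previous step, a linear isometry from the dense subspace ${\cal C}^\infty_0(\Sigma;\S_A\oplus\S^{A'})\subset {\cal H}_s$ into ${\cal H}_t$. Hence it extends uniquely to a linear isometry $U(t,s):{\cal H}_s\to{\cal H}_t$. Because $U(s,t)$ provides an isometric inverse on the dense subspace, the extended $U(t,s)$ is onto, i.e.\ a surjective isometry, and $(\phi,\chi):=U(\cdot,s)(\alpha,\beta)$ is the claimed $L^2$ solution, with the norm identity holding by construction for every $(\alpha,\beta)\in{\cal H}_s$.

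For the continuity in time I would note that the densities $\sqrt{f}\,r^2$ of \eqref{dVolSigmat} depend on $t$ only through $r(t)$ and are locally uniformly bounded above and below on compact $t$-intervals, so the spaces ${\cal H}_t$ are mutually comparable there; continuity of $t\mapsto (\phi,\chi)\vert_{\Sigma_t}$ is immediate for smooth data and passes to the $L^2$ limit by the uniform isometry bound just obtained (an equicontinuity argument). For uniqueness, any $L^2$ solution with vanishing initial data has, by the conservation law, zero ${\cal H}_t$-norm for every $t$ and therefore vanishes; at the $L^2$ level this must be justified either by regularising the data or by a duality argument pairing the solution against smooth solutions of the time-reversed equation through the divergence identity.

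The main obstacle is not the formal conservation computation, which is essentially read off from \eqref{DivFree} and \eqref{FluxSigmat}, but the low-regularity bookkeeping. Two points require genuine care: making the divergence theorem rigorous up to the boundary, which is non-compact in the $x$-direction, by invoking finite propagation speed to control the lateral flux; and interpreting both the continuity statement and the uniqueness claim correctly in view of the fact that the Hilbert spaces ${\cal H}_t$ truly vary with $t$ (they only coincide after fixing the identification of the slices with $\Sigma$). Establishing uniqueness directly in the $L^2$ class --- rather than merely within the family of limits of smooth solutions --- is the step that demands the most attention.
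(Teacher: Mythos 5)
Your proposal is correct and follows exactly the route the paper itself indicates: smooth compactly supported data via Lemma~\ref{LerayCP}, the divergence theorem applied to the conserved current $J^a$ on a slab with finite propagation speed killing the lateral flux, the flux identity \eqref{FluxSigmat}, and extension to all of $L^2$ by density. The paper gives no more detail than this one-line citation of ingredients, so your write-up is simply a careful expansion of the intended argument, with appropriate attention to the uniqueness step at the $L^2$ level.
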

\begin{remark}
The conserved current is gauge invariant
\begin{equation} \label{GTCC}
\tilde{J}_a = \tilde{\phi}_A \bar{\tilde{\phi}}_{A'} + \bar{\tilde{\chi}}_{A} \tilde{\chi}_{A'} =J_a \, .
\end{equation}
\end{remark}

\subsection{Explicit expression via the Newman-Penrose formalism}

An explicit expression of \eqref{DirEq} as a symmetric hyperbolic system can be obtained using the Newman-Penrose formalism. The spin-coefficients are the decomposition of the connexion coefficients in the basis given by the Newman-Penrose tetrad.  They are given by the general formulae (see Section 4.5 in Penrose and Rindler Vol. 1 \cite{PeRi})
\begin{gather*}
\kappa = m^a \nabla_l l_a \, ,~\tilde{\rho} = m^a \nabla_{\bar{m}} l_a \,
,~\tilde{\sigma} = m^a \nabla_m l_a \, ,~\tau = m^a \nabla_n l_a \, , \\
\varepsilon = \frac{1}{2} \left( n^a \nabla_l l_a + m^a \nabla_l \bar{m}_a \right)
\, ,~ \alpha = \frac{1}{2} \left( n^a \nabla_{\bar{m}} l_a + m^a \nabla_{\bar{m}}
  \bar{m}_a \right) \\
\beta = \frac{1}{2} \left( n^a \nabla_m l_a + m^a \nabla_m \bar{m}_a
\right) \, ,~ \gamma = \frac{1}{2} \left( n^a \nabla_n l_a + m^a \nabla_n \bar{m}_a
\right) \, , \\
\pi = - \bar{m}^a \nabla_l n^a \, ,~ \lambda = - \bar{m}^a \nabla_{\bar{m}} n^a \,
,~ \mu = - \bar{m}^a \nabla_m n^a \, ,~ \nu = - \bar{m}^a \nabla_n n^a \, .
\end{gather*}
For our choice of tetrad \eqref{NPTetrad}, denoting $f' = \d f / \d r$,
\begin{gather}
\rho = -\mu = \frac{1}{r} \sqrt{\frac{f}{2}}  \, ,~ \gamma = -\varepsilon = \frac{f'}{4\sqrt{2f}} \, ,~ \alpha = -\beta = \frac{-\cot \theta}{2 r \sqrt{2}} \, ,\nonumber \\
\kappa = \sigma = \lambda = \tau = \nu = \pi = 0 \, . \label{SpinCoeffs}
\end{gather}
The charged massive Dirac equation in the Newman-Penrose formalism takes the following form (see for instance \cite{Chan} or \cite{Ha1})
\[
\left\{ \begin{array}{l}
{ n^\mathbf{a} \left( \partial_{\mathbf{a}} - i q A_{\mathbf{a}} \right) \phi_0 - m^\mathbf{a}
 \left( \partial_{\mathbf{a}} - i q A_{\mathbf{a}} \right) \phi_1 + (\mu - \gamma )\phi_0 + (\tau - \beta )
\phi_1 = \frac{m}{\sqrt{2}} \chi^{0'} \, , } \\ \\
{ l^\mathbf{a}  \left( \partial_{\mathbf{a}} - i q A_{\mathbf{a}} \right) \phi_1 - \bar{m}^\mathbf{a}
 \left( \partial_{\mathbf{a}} - i q A_{\mathbf{a}} \right) \phi_0 + (\alpha - \pi )\phi_0 + (\varepsilon -
\rho ) \phi_1 = \frac{m}{\sqrt{2}} \chi^{1'} \, , } \\ \\
{ l^\mathbf{a}  \left( \partial_{\mathbf{a}} - i q A_{\mathbf{a}} \right) \chi^{0'} + m^\mathbf{a}
 \left( \partial_{\mathbf{a}} - i q A_{\mathbf{a}} \right) \chi^{1'}
+ (\bar{\varepsilon} - \bar{\rho} ) \chi^{0'} - (\bar{\alpha} - \bar{\pi} )\chi^{1'}= - \frac{m}{\sqrt{2}}
\phi_{0} \, ,} \\ \\ 
{ n^\mathbf{a}  \left( \partial_{\mathbf{a}} - i q A_{\mathbf{a}} \right) \chi^{1'} + \bar{m}^\mathbf{a}
 \left( \partial_{\mathbf{a}} - i q A_{\mathbf{a}} \right) \chi^{0'} - (\bar{\tau} - \bar{\beta} ) \chi^{0'} + (\bar{\mu} - \bar{\gamma} )\chi^{1'}
 = -\frac{m}{\sqrt{2}} \phi_{1} \, .}\end{array} \right.
\]
In our case, we get the following expression
\[
\left\{ \begin{array}{l}
{ \frac{1}{\sqrt{2f}} \left( \partial_{t} - \partial_x + i \frac{qQ}{r} \right) \phi_0 - \frac{1}{r\sqrt{2}}
 \left( \partial_{\theta} + \frac{i}{\sin \theta} \partial_\varphi \right) \phi_1 - \sqrt{\frac{f}{2}} \left( \frac{1}{r} + \frac{f'}{4f}\right) \phi_0 - \frac{\cot \theta}{2r\sqrt{2}}
\phi_1 = \frac{m}{\sqrt{2}} \chi^{0'} \, , } \\ \\
{ \frac{1}{\sqrt{2f}} \left( \partial_{t} + \partial_x - i \frac{qQ}{r} \right) \phi_1 - \frac{1}{r\sqrt{2}}
 \left( \partial_{\theta} - \frac{i}{\sin \theta} \partial_\varphi \right) \phi_0 - \frac{\cot \theta}{2r\sqrt{2}} \phi_0 - \sqrt{\frac{f}{2}} \left( \frac{1}{r} + \frac{f'}{4f}\right) \phi_1 = \frac{m}{\sqrt{2}} \chi^{1'} \, , } \\ \\
{ \frac{1}{\sqrt{2f}} \left( \partial_{t} + \partial_x - i \frac{qQ}{r} \right) \chi^{0'} + \frac{1}{r\sqrt{2}}
 \left( \partial_{\theta} + \frac{i}{\sin \theta} \partial_\varphi \right) \chi^{1'}
- \sqrt{\frac{f}{2}} \left( \frac{1}{r} + \frac{f'}{4f}\right) \chi^{0'} + \frac{\cot \theta}{2r\sqrt{2}} \chi^{1'}= - \frac{m}{\sqrt{2}}
\phi_{0} \, ,} \\ \\ 
{ \frac{1}{\sqrt{2f}} \left( \partial_{t} - \partial_x + i \frac{qQ}{r} \right) \chi^{1'} + \frac{1}{r\sqrt{2}}
 \left( \partial_{\theta} - \frac{i}{\sin \theta} \partial_\varphi \right) \chi^{0'} + \frac{\cot \theta}{2r\sqrt{2}} \chi^{0'} - \sqrt{\frac{f}{2}} \left( \frac{1}{r} + \frac{f'}{4f}\right) \chi^{1'}
 = -\frac{m}{\sqrt{2}} \phi_{1} \, .}\end{array} \right.
\]
Putting $\Phi = \, ^\mathrm{t} \left( \phi_0 , \phi_1 , \chi^{0'} , \chi^{1'} \right)$, the system above can be written under the form:
\begin{equation} \label{DirEqUnresc}
\left( \partial_t - \frac{f}{r} - \frac{f'}{4} \right) \Phi- \Gamma^1 \left( \partial_x - i \frac{qQ}{r} \right) \Phi - \frac{f^{\frac{1}{2}}}{r} \left( \Gamma^2 \left( \partial_\theta + \frac{\cot \theta}{2} \right) + \Gamma^3 \frac{i}{\sin \theta} \partial_\varphi \right) \Phi = -im f^{\frac{1}{2}} \gamma^0 \Phi \, ,
\end{equation}
where the Dirac matrices are given by
\[ \gamma^0 = i\left( \begin{array}{ll} 0 & \sigma^0 \\ -\sigma^0 & 0 \end{array} \right) \, , ~ \gamma^1 = i\left( \begin{array}{ll} 0 & \sigma^1 \\ \sigma^1 & 0 \end{array} \right) \, ,~ \gamma^2 = i\left( \begin{array}{ll} 0 & \sigma^2 \\ \sigma^2 & 0 \end{array} \right) \, ,~ \gamma^3 = i\left( \begin{array}{ll} 0 & \sigma^3 \\ \sigma^3 & 0 \end{array} \right) \, , \]
the Pauli matrices being
\[ \sigma^0 = \left( \begin{array}{cc} 1 & 0 \\ 0 & 1 \end{array} \right) \, ,~ \sigma^1 = \left( \begin{array}{cc} 1 & 0 \\ 0 & -1 \end{array} \right) \, ,~\sigma^2 = \left( \begin{array}{cc} 0 & 1 \\ 1 & 0 \end{array} \right) \, ,~\sigma^3 = \left( \begin{array}{cc} 0 & i \\ -i & 0 \end{array} \right) \]
and the matrices $\Gamma^1$, $\Gamma^2$, $\Gamma^3$ are
\[ \Gamma^1 = -\gamma^0 \gamma^1=\left( \begin{array}{cc} \sigma^1 & 0 \\ 0 & -\sigma^1 \end{array} \right) \, ,~ \Gamma^2 = -\gamma^0 \gamma^2=\left( \begin{array}{cc} \sigma^2 & 0 \\ 0 & -\sigma^2 \end{array} \right) \, ,~\Gamma^3 = -\gamma^0 \gamma^3=\left( \begin{array}{cc} \sigma^3 & 0 \\ 0 & -\sigma^3 \end{array} \right) \, .\]

\subsection{Evolution system}

Equation \eqref{DirEqUnresc} can be simplified by multiplying the spinor $\Phi$ by the square root of the $3$-volume density of the $t=\mathrm{cst}$ slices. We put
\begin{equation} \label{SpinorRescaling}
\Psi = r f^{\frac{1}{4}} \Phi \, .
\end{equation}
This transformation at each time $t$ is an isometry between ${\cal H}_t$ (described in terms of components of the spinors in the spin-frame $\{ o^A , \iota^A\}$ related by \eqref{NPTSF} to the Newman-Penrose tetrad \eqref{NPTetrad}) and the space
\begin{equation} \label{SpaceCalH}
\mathcal{H} = L^2 (\Sigma \, ;~ \C^4 ) \, ,~ \Vert \Psi \Vert^2_{\mathcal{H}} = \frac{1}{\sqrt{2}} \int_{\Sigma} \vert \Psi \vert^2 \d x \d \omega
\end{equation}
where $\vert \Psi \vert$ is the canonical norm in $\C^4$.

Equation \eqref{DirEqUnresc} then becomes
\begin{equation} \label{DirEqResc1}
\partial_t \Psi- \Gamma^1 \partial_x \Psi - i \frac{f^{\frac{1}{2}}}{r} \sld \Psi + i \Gamma^1 \frac{qQ}{r} \Psi = -im f^{\frac{1}{2}} \gamma^0 \Psi \, ,
\end{equation}
where
\begin{eqnarray*}
\sld &=& -i \left( \Gamma^2 \left( \partial_\theta + \frac{\cot \theta}{2} \right) + \Gamma^3 \frac{i}{\sin \theta} \partial_\varphi \right) = \left( \begin{array}{cc} \sld_{\s^2} & 0 \\ 0 & -\sld_{\s^2} \end{array} \right) \, ,\\
\sld_{\s^2} &=& -i \left( \sigma^2 \left( \partial_\theta + \frac{\cot \theta}{2} \right) + \sigma^3 \frac{i}{\sin \theta} \partial_\varphi \right) \, ,
\end{eqnarray*}
$\sld_{\s^2}$ being the Dirac operator on $\s^2$. 
This equation can be put in Hamiltonian form
\begin{equation} \label{DirEqHamil}
\partial_t \Psi = iH(t) \Psi \, ,~ H(t) = \Gamma^1 D_x + \frac{f^{\frac{1}{2}}}{r} \sld - \Gamma^1 \frac{qQ}{r} - m f^{\frac{1}{2}} \gamma^0\, .
\end{equation}
Since for $i,j\in \{ 1,2,3\}$, $i\neq j$, $\Gamma^i$ and $\Gamma^j$ anti-commute and $\gamma^0$ also anti-commutes with $\Gamma^i$, we have for $q =0$
\[ H(t) ^2 = -\partial_x^2 + \frac{f}{r^2} \sld^2 + f m^2 \, .\]
It follows that the natural domain of $H(t)$ (for any $q \in \R$ and any $m \geq 0$) is
\[ D (H(t)) = \{ \Psi \in \mathcal{H} \, ;~ H(t) \Psi \in \mathcal{H} \} = H^1 (\R \times \s^2 \, ;~ \C^4) \, .\]
The graph norm on the domain of $H(t)$
\[ \Vert \Psi \Vert^2_{D(H(t))} = \Vert \Psi \Vert^2_{\mathcal{H}} + \Vert H(t) \Psi \Vert^2_{\mathcal{H}} \]
is such that any two norms $\Vert . \Vert^2_{D(H(t))}$ and $\Vert . \Vert^2_{D(H(s))}$ are equivalent, but this equivalence is only locally uniform in $(t,s)$ because of the factor $f$ in front of the operator $\sld$. Lemma \ref{L2CP} as well as the above discussion yield
\begin{lemma} \label{HCP}
For any $s\in \R$, for any initial data $\Xi \in \mathcal{H}$, there exists a unique $\Psi \in {\cal C}^0 (\R_t \, ;~ \mathcal{H} )$ solution of \eqref{DirEqHamil} such that $\Psi \vert_{t=s} = \Xi $. Moreover, for all $t \in \R$ we have
\[ \Vert \Psi (t) \Vert_{{\mathcal H}} = \Vert \Xi \Vert_{{\mathcal H}} \, .\]
If in addition $\Xi \in D(H(s))$, then
\[ \Psi \in {\cal C}^0 (\R_t \, ;~ H^1 (\R \times \s^2 \, ;~ \C^4 ) ) \cap {\cal C}^1 (\R_t \, ;~\mathcal{H}) \, .\]
\end{lemma}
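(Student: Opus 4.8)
The plan is to deduce everything from Lemma \ref{L2CP} by transporting it through the rescaling \eqref{SpinorRescaling}. First I would record that, at each fixed $t$, the map $\Phi\mapsto\Psi=rf^{1/4}\Phi$ is a unitary isomorphism from $\mathcal{H}_t$ onto $\mathcal{H}$: this is exactly the computation already made, since $|\Psi|^2=r^2\sqrt f\,|\Phi|^2$ turns the measure \eqref{dVolSigmat} into the flat measure of \eqref{SpaceCalH}. Given $\Xi\in\mathcal{H}$, I define initial data $(\alpha_A,\beta^{A'})$ on $\Sigma_s$ by prescribing its spin-frame components to be those of $r^{-1}f^{-1/4}\Xi$ at $t=s$; this lies in $L^2(\Sigma_s;\S_A\oplus\S^{A'})$ with $\Vert(\alpha,\beta)\Vert_{\mathcal{H}_s}=\Vert\Xi\Vert_{\mathcal{H}}$. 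Lemma \ref{L2CP} then produces a unique $(\phi_A,\chi^{A'})$, and setting $\Psi(t)=rf^{1/4}\,{}^{\mathrm t}(\phi_0,\phi_1,\chi^{0'},\chi^{1'})$ gives, by the derivation leading from \eqref{DirEqUnresc} to \eqref{DirEqResc1}, a solution of \eqref{DirEqHamil} with $\Psi|_{t=s}=\Xi$. Continuity $\Psi\in\mathcal{C}^0(\R_t;\mathcal{H})$ transfers from that of $(\phi,\chi)$ because $rf^{1/4}$ is smooth and, on each compact $t$-interval, bounded with bounded inverse; uniqueness is inherited from Lemma \ref{L2CP} by applying the inverse rescaling to any $\mathcal{H}$-solution; and the conservation $\Vert\Psi(t)\Vert_{\mathcal H}=\Vert(\phi,\chi)(t)\Vert_{\mathcal H_t}=\Vert(\alpha,\beta)\Vert_{\mathcal H_s}=\Vert\Xi\Vert_{\mathcal H}$ follows from the two isometries and Lemma \ref{L2CP}.

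For the propagation of $H^1$ regularity I would argue by approximation together with an energy estimate. By density choose $\Xi_n\in\mathcal{C}^\infty_0(\Sigma;\C^4)$ with $\Xi_n\to\Xi$ in $H^1=D(H(s))$; pulling these back through the rescaling and invoking Lemma \ref{LerayCP} yields smooth solutions $\Psi_n$ of \eqref{DirEqHamil}. The clean structural input is that $D_x$ commutes with $H(t)$, since the coefficients $f^{1/2}/r$, $qQ/r$ and $mf^{1/2}$ depend on $t$ only; hence $D_x\Psi_n$ solves \eqref{DirEqHamil} with datum $D_x\Xi_n$ and, by the first part, $\Vert D_x\Psi_n(t)\Vert_{\mathcal H}=\Vert D_x\Xi_n\Vert_{\mathcal H}$ is conserved. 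For the remaining derivatives I would use the elliptic structure recorded in the discussion, namely that $H(t)^2=-\partial_x^2+\tfrac{f}{r^2}\sld^2$ modulo lower order (the first-order cross term cancels because $\Gamma^1$ and $\sld$ anticommute), so that the graph norm of $H(t)$ is equivalent to the $H^1$ norm, locally uniformly in $t$. Differentiating $t\mapsto\Vert H(t)\Psi_n(t)\Vert_{\mathcal H}^2$ and using $\partial_t\Psi_n=iH(t)\Psi_n$, the contribution $2\Re\langle iH(t)^2\Psi_n,H(t)\Psi_n\rangle$ vanishes by self-adjointness of $H(t)$, leaving only $2\Re\langle\dot H(t)\Psi_n,H(t)\Psi_n\rangle$, where $\dot H(t)$ is a first-order operator with coefficients smooth in $t$; bounding $\Vert\dot H(t)\Psi_n\Vert_{\mathcal H}$ by the $H^1$ norm and hence by the graph norm, Gronwall's inequality controls $\Vert H(t)\Psi_n(t)\Vert_{\mathcal H}$ on every compact interval in terms of $\Vert H(s)\Xi_n\Vert_{\mathcal H}$.

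Passing to the limit then finishes the argument: the estimate shows $H(\cdot)\Psi_n$ is Cauchy in $\mathcal{C}^0_{\mathrm{loc}}(\R_t;\mathcal H)$ and, combined with the $\mathcal H$-convergence of $\Psi_n$ and the locally uniform graph-norm equivalence, that $\Psi_n$ is Cauchy in $\mathcal{C}^0_{\mathrm{loc}}(\R_t;H^1)$; its limit is $\Psi$, whence $\Psi\in\mathcal{C}^0(\R_t;H^1)$ and $\Psi(t)\in D(H(t))$ for all $t$. Finally $\partial_t\Psi_n=iH(t)\Psi_n\to iH(t)\Psi$ in $\mathcal{C}^0_{\mathrm{loc}}(\R_t;\mathcal H)$ while $\Psi_n\to\Psi$, so $\Psi\in\mathcal{C}^1(\R_t;\mathcal H)$ with $\partial_t\Psi=iH(t)\Psi$, as required. (This last part is just an instance of the standard theory of evolution equations with constant domain, whose hypotheses are exactly what the preceding discussion verifies.)

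I expect the main obstacle to be the joint control of the angular $\partial_\theta$-regularity and the time regularity. Unlike $D_x$ (and $\partial_\varphi$, which also commutes with $H(t)$), the operator $\partial_\theta$ is not associated with a symmetry commuting with $H(t)$ --- the spinorial rotation generators fail to commute with $\Gamma^1D_x$ --- so angular regularity cannot be obtained by pure commutation and must instead be extracted from the ellipticity of $H(t)^2$, which couples it to the equation itself and hence to the $\mathcal{C}^1$-in-time statement. The second delicate point is that $f$ vanishes at the two horizons, so the graph-norm/$H^1$ equivalence and the Gronwall constant degenerate as $t\to\pm\infty$; this forces all the above estimates to be run on compact $t$-intervals and then patched, which is harmless for the stated local-in-time regularity but is precisely why the equivalence of domains is only \emph{locally} uniform.
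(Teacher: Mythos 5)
Your proposal is correct and follows the same route as the paper, whose proof of Lemma \ref{HCP} consists precisely in invoking Lemma \ref{L2CP} together with the preceding discussion (the unitarity of the rescaling $\Psi = rf^{1/4}\Phi$ from $\mathcal{H}_t$ onto $\mathcal{H}$, the identification $D(H(t))=H^1(\R\times\s^2;\C^4)$, and the locally uniform equivalence of the graph norms); your approximation-plus-Gronwall argument for the propagation of $H^1$ regularity is a legitimate filling-in of what the paper leaves implicit in the phrase ``the above discussion''. The only quibble is your closing claim that the spinorial rotation generators fail to commute with $\Gamma^1 D_x$ --- the paper asserts the opposite in the remark following Definition \ref{ScatOp} --- but since your proof extracts the angular regularity from the graph norm rather than from commutation, nothing in your argument depends on this.
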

We define the evolution system for Equation \eqref{DirEqHamil} as follows.
\begin{definition} \label{Propagator}
The evolution system for Equation \eqref{DirEqHamil} is the family $\{ \mathcal{U} (t,s)\}_{(s,t)\in\R^2}$ of bounded operators on $\mathcal{H}$ such that for any $s\in \R$, for any initial data $\Xi \in \mathcal{H}$, $t \rightarrow \mathcal{U} (t,s) \Xi$ is the solution $\Psi$ given in Lemma \ref{HCP}. It satisfies:
\begin{description}
\item[(1)] for any $(s,t)\in \R^2$, $\mathcal{U}(t,s)$ is a unitary operator on $\mathcal{H}$ and a (locally uniformly in $(t,s)$) bounded operator on $H^1 (\R \times S^2 \, ;~ \C^4)$;
\item[(2)] for any $s\in \R$, $t \mapsto \mathcal{U}(t,s)$ is strongly continuous on $\mathcal{H}$ and on $H^1 (\R \times S^2 \, ;~ \C^4)$;
\item[(3)] for any $t_1, t_2,t_3 \in \R$, $\mathcal{U}(t_3,t_2) \mathcal{U}(t_2,t_1) = \mathcal{U}(t_3,t_1)$ and for any $t \in \R$ $\mathcal{U}(t,t) =\mathrm{Id}_{\mathcal{H}}$;
\item[(4)] for $\Xi \in D(H(s))$,
\[ \frac{\d}{\d t} \mathcal{U}(t,s) \Xi = i H(t) \mathcal{U}(t,s) \Xi\, .\]
\end{description}
These four properties entail the next two:
\begin{description}
\item[(2')] for any $t\in \R$, $s \mapsto \mathcal{U}(t,s)$ is strongly continuous on $\mathcal{H}$ and on $H^1 (\R \times S^2 \, ;~ \C^4)$;
\item[(4')] for $\Xi \in D(H(s))$,
\[ \frac{\d}{\d s} \mathcal{U}(t,s) \Xi = -i \mathcal{U}(t,s) H(s) \Xi\, .\]
\end{description}
\end{definition}
Taking advantage of the spherical symmetry of the spacetime, we can diagonalize the operator $\sld$ using spin-weighted spherical harmonics. For any $m$ such that $2m \in \Z$, let
\[ \mathcal{I}_m=\{ l,n\,;~ l- \vert m\vert \in \mathbb{N}\, ,~l- \vert n\vert \in \mathbb{N} \}\, .\]
The spin-weighted spherical harmonics of spin-weight $m$ are a Hilbert basis $\{ W^{l}_{mn} \}_{(l,n)\in {\mathcal I}_m}$ of $L^2 (\s^2)$. For details of their definition and properties, see for example \cite{Ni1997}. They give us a decomposition of $\Psi \in \mathcal{H}$ according to the spin-weights of its components as follows
\[ \Psi=\sum_{(l,n)\in \I}{\psi}^l_n \odot {W}^l_n \, , ~
{W}^l_n= \, ^\mathrm{t} \left(
W^l_{-\frac{1}{2} n},
W^l_{\frac{1}{2} n},
W^l_{-\frac{1}{2} n},
W^l_{\frac{1}{2} n}
\right)\, , ~ {\psi}^l_n \in \HB= L^2 (\R_x \, ;~ \C^4 ) \]
where $\odot$ denotes the Hadamard product
\[ \left(\begin{array}{c}
x_1\\
y_1\\
z_1\\
t_1
\end{array}\right) \odot \left(\begin{array}{c}
x_2\\
y_2\\
z_2 \\
t_2
\end{array}\right)= \left(\begin{array}{c}
x_1x_2\\
y_1y_2\\
z_1z_2 \\
t_1t_2
\end{array}\right) \, ,\]
such that
\begin{equation}
\sld\Psi=- \Gamma^2 \sum_{(l,n)\in \I}(l+\frac{1}{2}) {\psi}^l_n \odot {W}^l_n \, .
\end{equation}
The hamiltonian $H(t)$ decomposes accordingly
\[H(t)\Psi=\sum_{(l,n) \in \I}H^l(t){\psi}^l_n \odot {W}^l_n \, ,\]
where 
\begin{equation}
H^l(t)=\Gamma^1D_x  -\left(l+\frac{1}{2}\right)\frac{f^\frac{1}{2}}{r}\Gamma^2 -mf^\frac{1}{2} \gamma^0  -\Gamma^1 \frac{qQ}{r},
\end{equation}
whose natural domain is
\[ D (H^l (t) ) = \{ \psi \in \HB \, ;~ H^l (t) \psi \in \HB \} = H^1(\R_x \, ;~ \C^4 ) \, .\]
Note that the graph norm on $D(H^l (t) )$ is uniformly equivalent to the standard $H^1$ norm.

For a fixed angular momentum, i.e. for
\[ \Psi = \psi \odot {W}^l_n \, ,\]
Equation \eqref{DirEqHamil} reduces to
\begin{equation}\label{DiraceqHamiltonianfixmode}
\partial_t\psi=iH^l(t)\psi \, .
\end{equation}

\section{Scattering Theory} \label{ScatteringTheory}

The charge interaction term $\Gamma^1 qQ/r$ has distinct non vanishing limits at both horizons. It is therefore natural to introduce different comparison dynamics at each horizon. Let us define the hamiltonians
\begin{equation} \label{CompHamil}
H_0^\pm = \Gamma^1 D_x - \Gamma^1 \frac{qQ}{r_\mp} \, .
\end{equation}
Both $H^+_0$ and $H_0^-$ are self-adjoint on $\mathcal{H}$ with domain $H^1 (\R_x \, ;~ L^2 (\s^2 \, ;~ \C^4))$. Our main result is the following.
\begin{theorem} \label{ThmScattering}
The future and past direct wave operators $W^\pm$ and inverse wave operators $\Omega^\pm$ are well-defined on $\mathcal{H}$ as the strong limits:
\begin{eqnarray}
	W^\pm &=& s-\lim\limits_{t\rightarrow \pm \infty}\mathcal{U}(0,t)e^{itH_0^\pm},\label{Waveopslim}\\
	\Omega^\pm &=& s-\lim\limits_{t\rightarrow \pm \infty}e^{-itH_0^\pm} \mathcal{U}(t,0)\, .	\label{InvWaveopslim}
\end{eqnarray}
They are unitary operators on $\mathcal{H}$. Moreover,
\begin{equation}
	W^\pm\Omega^\pm=\Omega^\pm W^\pm=\mathrm{Id}_{\mathcal{H}} \, .
	\end{equation}
\end{theorem}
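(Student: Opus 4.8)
The plan is to establish the existence of all four strong limits by Cook's method, and then to deduce unitarity together with the inverse relations by a purely algebraic argument based on the cocycle structure of Definition \ref{Propagator}. The starting observation is that the comparison hamiltonians \eqref{CompHamil} act on the $x$ variable only through $\Gamma^1 D_x$ and otherwise by the constant matrix $-\Gamma^1 qQ/r_\mp$; they therefore leave each spin-weight component invariant and commute with $\sld$. Hence it suffices to prove convergence on the dense subspace of finite linear combinations $\sum_{(l,n)} \psi^l_n \odot W^l_n$ with $\psi^l_n \in {\cal C}^\infty_0(\R_x;\C^4)$, working one angular mode at a time. On a fixed mode the evolution \eqref{DirEqHamil} reduces to \eqref{DiraceqHamiltonianfixmode} governed by $H^l(t)$, with an associated unitary reduced propagator on $\HB$, and $\sld$ becomes the bounded multiplier $-(l+\tfrac12)\Gamma^2$.

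The key computation is the difference
\[ H^l(t) - H_0^\pm = -\left(l+\tfrac12\right)\frac{f^{\frac12}}{r}\Gamma^2 - m f^{\frac12}\gamma^0 + qQ\,\Gamma^1\left(\frac{1}{r_\mp} - \frac1r\right), \]
in which the unbounded operator $\Gamma^1 D_x$ has cancelled, leaving a matrix-valued potential bounded on each mode. By \eqref{requivrpm} both $f^{\frac12}$ and $r_\mp^{-1}-r^{-1}$ decay exponentially: as $t\to+\infty$ one has $r-r_-=O(e^{2\kappa_- t})$ with $\kappa_-<0$, so $f^{\frac12}=O(e^{\kappa_- t})$ and $r_-^{-1}-r^{-1}=O(e^{2\kappa_- t})$, and symmetrically as $t\to-\infty$ with $\kappa_+>0$. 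Thus $\|H^l(t)-H_0^\pm\|_{{\cal L}(\C^4)}$ is integrable over the relevant half-line. For the direct operators I differentiate using property \textbf{(4')} and the unitarity of $\mathcal{U}(0,t)$:
\[ \frac{\d}{\d t}\,\mathcal{U}(0,t)e^{itH_0^\pm}\Phi = i\,\mathcal{U}(0,t)\big(H_0^\pm - H(t)\big)e^{itH_0^\pm}\Phi, \]
whose norm equals $\|(H_0^\pm-H(t))e^{itH_0^\pm}\Phi\|$; since $e^{itH_0^\pm}$ is unitary and preserves the finite set of modes occurring in $\Phi$, this is bounded by the integrable quantity above times $\|\Phi\|$. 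Cook's criterion yields convergence, and the isometry of $\mathcal{U}(0,t)e^{itH_0^\pm}$ extends the limit to all of $\mathcal{H}$, defining the isometries $W^\pm$. The inverse operators are treated identically: the derivative of $e^{-itH_0^\pm}\mathcal{U}(t,0)\Xi$ has norm $\|(H(t)-H_0^\pm)\mathcal{U}(t,0)\Xi\|$, and on a fixed mode $\mathcal{U}(t,0)$ is the unitary reduced propagator, so $\|\sld\,\mathcal{U}(t,0)\xi\|=(l+\tfrac12)\|\xi\|$ remains bounded and the same exponential decay makes the integral converge, giving the isometries $\Omega^\pm$.

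Once existence is known, the inverse relations are immediate. For any $\Xi\in\mathcal{H}$, the cocycle identity $\mathcal{U}(s,0)\mathcal{U}(0,s)=\mathrm{Id}_{\mathcal{H}}$ and unitarity give
\[ \big\| e^{-isH_0^\pm}\mathcal{U}(s,0)W^\pm\Xi - \Xi\big\| = \big\| W^\pm\Xi - \mathcal{U}(0,s)e^{isH_0^\pm}\Xi\big\| \xrightarrow[s\to\pm\infty]{} 0, \]
so $\Omega^\pm W^\pm = \mathrm{Id}_{\mathcal{H}}$; the symmetric computation yields $W^\pm\Omega^\pm = \mathrm{Id}_{\mathcal{H}}$. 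Together these show that $W^\pm$ and $\Omega^\pm$ are surjective isometries, hence unitary and mutually inverse. I expect the main obstacle to be the Cook estimate for the \emph{inverse} operators, where one must control the angular regularity $\|\sld\,\mathcal{U}(t,0)\Xi\|$ of the genuine evolution against the vanishing of $f^{\frac12}/r$ at the horizon. This is precisely where the mode decomposition is essential: it converts $\sld$ into a bounded multiplier and lets the unitarity of the reduced propagator hold the relevant norm constant, while the horizon asymptotics \eqref{requivrpm} supply the integrable decay that closes the argument.
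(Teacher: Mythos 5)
Your proposal is correct and follows essentially the same route as the paper: Cook's method on the dense subspace of fixed angular modes, the cancellation of $\Gamma^1 D_x$ in $H^l(t)-H_0^\pm$, the exponential decay $O(e^{\kappa_\mp t})$ of the remaining multiplication operators via \eqref{requivrpm}, and extension by density. The only difference is that you spell out the Cook estimate for $\Omega^\pm$ and the algebraic derivation of unitarity and the inverse relations, which the paper dismisses as analogous and immediate; your added details are accurate.
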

\begin{proof}
We merely need to prove the existence of the limits (\ref{Waveopslim}) and (\ref{InvWaveopslim}), the other properties are then immediate. We start by establishing these limits on the dense subspace of $\mathcal{H}$ generated by elements of the form $\psi \odot W^l_{n}$ where $l \in \I$ and $\psi\in \mathcal{C}^\infty_0 (\R \, ;~\C^4 )$. We perform the proof for the future wave operator $W^+$.

Let $l \in \I$ and $\psi\in \mathcal{C}^\infty_0 (\R \, ;~\C^4 )$, we consider $\Psi_0 = \psi \odot W^l_{n} \in \mathcal{H}$. By Cook's method, it is sufficient to establish that
\[ \partial_t(\mathcal{U}(0,t)e^{itH_0^+}\Psi_0)\in L^1\left(\R^+_t;\HB\right) \, .\]
Using the properties of $\mathcal{U}$ given in Definition \ref{Propagator} ({\it (4')} in particular), we have
\begin{eqnarray*}
\partial_t(\mathcal{U}(0,t)e^{itH_0^+}\Psi_0) &=& \mathcal{U}(0,t)\left( -i H^l + i H_0^+ \right) e^{itH_0^+}\Psi_0 \\
&=& \mathcal{U}(0,t)\left( i\left(l+\frac{1}{2}\right)\frac{f^\frac{1}{2}}{r}\Gamma^2 +imf^\frac{1}{2} \gamma^0  +i \Gamma^1 qQ\left( \frac{1}{r} - \frac{1}{r_-} \right) \right) e^{itH_0^+} \Psi_0 \, .
\end{eqnarray*}
As $t \rightarrow +\infty$,
\[ \left\Vert i\left(l+\frac{1}{2}\right)\frac{f^\frac{1}{2}}{r}\Gamma^2 +imf^\frac{1}{2} \gamma^0  +i \Gamma^1 qQ\left( \frac{1}{r} - \frac{1}{r_-} \right)\right\Vert_{L^\infty (\Sigma_t )} = O (\sqrt{r-r_-}) = O (e^{\kappa_- t}) \, ,\]
therefore, using the unitarity of both $\mathcal{U} (0,t)$ and $e^{itH_0^+}$,
\begin{align*}
\left\Vert \partial_t(\mathcal{U}(0,t)e^{itH_0^+}\Psi_0) \right\Vert_{\mathcal{H}}& = \left\Vert \mathcal{U}(0,t)\left( i\left(l+\frac{1}{2}\right)\frac{f^\frac{1}{2}}{r}\Gamma^2 +imf^\frac{1}{2} \gamma^0  +i \Gamma^1 qQ\left( \frac{1}{r} - \frac{1}{r_-} \right) \right) e^{itH_0^+} \Psi_0 \right\Vert_{\mathcal{H}}\\
&=\left\Vert \left( i\left(l+\frac{1}{2}\right)\frac{f^\frac{1}{2}}{r}\Gamma^2 +imf^\frac{1}{2} \gamma^0  +i \Gamma^1 qQ\left( \frac{1}{r} - \frac{1}{r_-} \right) \right) e^{itH_0^+} \Psi_0 \right\Vert_{\mathcal{H}}\\
&\lesssim e^{\kappa_-t} \left\Vert e^{itH_0^+} \Psi_0 \right\Vert_{\mathcal{H}} = e^{\kappa_-t} \left\Vert \Psi_0 \right\Vert_{\mathcal{H}} \in L^1 ([0,+\infty[_t) \, .
\end{align*}
It follows that the operator $W^+$ is well-defined on a dense subspace of $\mathcal{H}$. Since it is a limit of unitary operators, it satisfies for all $\Psi_0 = \psi \odot W^l_{n}$, $l \in \I$ and $\psi\in \mathcal{C}^\infty_0 (\R \, ;~\C^4 )$,
\[ \Vert W^+ \Psi_0 \Vert_{\mathcal{H}} = \Vert \Psi_0 \Vert_{\mathcal{H}} \, . \]
Therefore $W^+$ extends uniquely to a unitary operator on $\mathcal{H}$, still defined by the strong limit \eqref{Waveopslim}. The proof is analogous for $W^-$ and $\Omega^\pm$.
\end{proof}
\begin{definition} \label{ScatOp}
The scattering operator sums up the evolution of the field from its past scattering data to its future scattering data:
\begin{equation}\label{Scatteringop}
S=\Omega^+W^- .
\end{equation}
It is a unitary operator on $\mathcal{H}$.
\end{definition}
\begin{remark}
Since the Hamiltonian $H(t)$ varies with time, we do not expect entertwining relations to hold between the wave operators and the Hamiltonians. However, the Killing vector fields $\partial_x$ and the generators of rotations all commute with $H(t)$ and $H_0^\pm$. It follows that regularity is transferred either way between the Cauchy and scattering data.
\end{remark}
\begin{remark}[Gauge freedom]
As we will see in Section \ref{GeometricalInterpretation}, when considering the limit $t\rightarrow+\infty$, it is more natural to work with the potential $\tilde{A}$, which is smooth at $r_-$, and therefore to consider the evolution for $\tilde{\Psi}=e^{-i\frac{qQ}{r_-}x}\Psi$. The corresponding hamiltonian is 
\[\tilde{H}(t)=e^{-i\frac{qQ}{r_-}x}H(t)e^{i\frac{qQ}{r_-}x}=\Gamma^1 D_x + \frac{f^{\frac{1}{2}}}{r} \sld - \Gamma^1 \left(\frac{qQ}{r}-\frac{qQ}{r_-}\right) - m f^{\frac{1}{2}} \gamma^0 \, .\]
Let $\tilde{\mathcal U}(t,s)=e^{-i\frac{qQ}{r_-}x}{\mathcal U}(t,s)e^{i\frac{qQ}{r_-}x}$ be the corresponding evolution system and $\tilde{H}_0^+=\Gamma^1 D_x$. From the previous results we obtain immediately the existence of the wave operators 
\begin{eqnarray*}
\tilde{\Omega}^+&=&s-\lim_{t\rightarrow+\infty}e^{-it\tilde{H}_0^+}\tilde{\mathcal U}(t,0)=e^{-i\frac{qQ}{r_-}x}\Omega^+e^{i\frac{qQ}{r_-}x},\\
\tilde{W}^+&=&s-\lim_{t\rightarrow+\infty}\tilde{\mathcal U}(0,t)e^{it\tilde{H}_0^+}=e^{-i\frac{qQ}{r_-}x}W^+e^{i\frac{qQ}{r_-}x}. 
\end{eqnarray*}
A similar remark holds for the limit $t\rightarrow-\infty$, for which it is natural to work with the potential $\hat{A}=Q\left(\frac{1}{r}-\frac{1}{r_+}\right) \d x$. 
\end{remark}

\section{Geometric Interpretation} \label{GeometricalInterpretation}

In this section we will show how the inverse wave operators can be understood as trace operators on the horizons and the direct wave operators as providing the solution to Goursat problems set at the horizons. We detail this interpretation for $t\rightarrow +\infty$, i.e. at the Cauchy horizon. The construction as $t\rightarrow -\infty$ is analogous. We start by defining natural trace operators using essentially the gauge freedom of the equation and the theorems by Leray on hyperbolic equations. Then we interpret the wave operators, after some phase modifications, as a realisation of these trace operators in a different spin dyad.

\subsection{Trace operators}

First, let us use the gauge transformation \eqref{Gauger-}: for $(\phi_A , \chi^{A'})$ solution to \eqref{DirEq}, the Dirac spinor
\[ \tilde{\Upsilon} := (\tilde{\phi}_A , \tilde{\chi}^{A'}) = e^{-i\frac{qQ}{r_-}x} (\phi_A , \chi^{A'}) \]
satisfies Equation \eqref{GTDirEq} with the electromagnetic potential
\[ \tilde{A} = Q \left( \frac{1}{r} - \frac{1}{r_-} \right) \d x \, .\]
This equation is hyperbolic with smooth coefficients on $\bar{\mathcal{M}}^+ = \mathcal{M} \cup \{r=r_-\} $. Therefore we can use Leray's theorems to infer the following result.
\begin{proposition} \label{Leray}
Let $(\tilde{\alpha}_A , \tilde{\beta}^{A'} ) \in \mathcal{C}^\infty_0 (\Sigma_0 \, ;~ \S_A \oplus \S^{A'} )$ and let $(\tilde{\phi}_A , \tilde{\chi}^{A'}) \in \mathcal{C}^\infty (\mathcal{M}\,;~\S_A \oplus \S^{A'} )$ be the solution to \eqref{GTDirEq} whose restriction on $\Sigma_0$ is equal to $(\tilde{\alpha}_A , \tilde{\beta}^{A'} )$. Then $(\tilde{\phi}_A , \tilde{\chi}^{A'})$ extends as a smooth section of $\S_A \oplus \S^{A'}$ on $\bar{\mathcal{M}}^+$. The conserved current \eqref{GTCC} for \eqref{GTDirEq} means that
\begin{equation} \label{CCBar}
\Vert (\tilde{\alpha}_A , \tilde{\beta}^{A'} ) \Vert_{\mathcal{H}}^2 = \int_{{\hlm} \cup {\hrm}} \tilde{J}_a \nu^a \d \sigma \, ,
\end{equation}
where $\nu^a$ is a future-oriented null generator of $\hlm$ or $\hrm$ and $\d \sigma = V \lrcorner \d^4 \mathrm{Vol}_g$ where $V$ is a vector field transverse to $\hlm$ or $\hrm$ such that $\nu_aV^a=1$.
\end{proposition}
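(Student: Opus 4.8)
The plan is to treat the two assertions separately: first the smooth extension of the solution up to and including $\{r=r_-\}$, then the flux identity \eqref{CCBar}, which follows from the divergence theorem once smoothness is in hand. For the extension, the essential point is the one already noted: after the gauge transformation \eqref{Gauger-} the potential $\tilde A$ is smooth at $\{r=r_-\}$, so \eqref{GTDirEq} is a linear symmetric hyperbolic system whose coefficients — built from the metric in the Eddington–Finkelstein form \eqref{AEFMet}–\eqref{REFMet}, the spin-coefficients \eqref{SpinCoeffs}, and $\tilde A$ — extend smoothly and non-degenerately to $\bar{\mathcal M}^+$. Rather than run Cauchy theory directly up to the characteristic boundary, I would extend the geometry slightly beyond it: the $(v,r,\omega)$ and $(u,r,\omega)$ expressions of $g$ together with the smooth $\tilde A$ extend to an open range $r_- - \delta < r < r_+$, yielding an open globally hyperbolic spacetime $\tilde{\mathcal M}\supset\bar{\mathcal M}^+$ in which $\Sigma_0$ remains a Cauchy hypersurface. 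Applying Leray's theorem (as in Lemma \ref{LerayCP}) on $\tilde{\mathcal M}$ to the smooth compactly supported data $(\tilde\alpha_A,\tilde\beta^{A'})$ produces a smooth solution there; by uniqueness it coincides with $(\tilde\phi_A,\tilde\chi^{A'})$ on $\mathcal M$, so in particular it is smooth up to $\{r=r_-\}$. Finite propagation speed ensures that near $\{r=r_-\}$ the solution is supported in the causal future of the compact support of the data, hence vanishes outside a compact set in $x$.

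For the flux identity I would use that the current is gauge-invariant, $\tilde J_a = J_a$ by \eqref{GTCC}, and therefore divergence-free by \eqref{DivFree}. Introduce the $3$-form $\beta = \tilde J \lrcorner \d^4\mathrm{Vol}_g$, which is closed since $\d\beta = (\nabla^a \tilde J_a)\,\d^4\mathrm{Vol}_g = 0$. Let $\Omega\subset\bar{\mathcal M}^+$ be the region bounded below by $\Sigma_0$ and above by $\{r=r_-\}$; by the previous step $\beta$ is smooth on $\Omega$ and, by finite propagation, compactly supported there, so Stokes' theorem applies with no contribution from spatial infinity and none from the measure-zero crossing sphere $\scrs_{r_-}$. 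This gives
\[ \int_{\Sigma_0}\beta = \int_{{\hlm}\cup{\hrm}} \beta \, . \]
On $\Sigma_0$ the integrand reduces, by the computation leading to \eqref{FluxSigmat}, to $\tilde J_a\nu^a\,\mathrm{dVol}_{\Sigma_0}$ and hence to $\Vert(\tilde\alpha_A,\tilde\beta^{A'})\Vert^2_{\mathcal H}$. On each null component of the horizon, the pullback of $\beta$ is evaluated using a future-oriented null generator $\nu^a$ and a transverse field $V$ normalized by $\nu_aV^a=1$: since $\nu_a$ is the conormal of the null hypersurface, one checks $\beta|_{\hlm}=(\tilde J_a\nu^a)\,\d\sigma$ with $\d\sigma = V\lrcorner\d^4\mathrm{Vol}_g$, and likewise on $\hrm$, which is precisely \eqref{CCBar}.

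The main obstacle is the smooth extension across the \emph{characteristic} boundary $\{r=r_-\}$: because the horizon is null rather than spacelike, interior Cauchy regularity cannot be invoked directly, and the whole argument hinges on producing the open globally hyperbolic extension $\tilde{\mathcal M}$ on which the gauge-transformed system still has smooth coefficients and $\Sigma_0$ is still Cauchy. Once this geometric extension is in place, both the regularity up to the horizon and, via Stokes, the energy identity are essentially automatic; the only remaining care is the correct identification of the null-surface measure $\d\sigma$, a standard but orientation-sensitive computation.
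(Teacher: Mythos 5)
Your overall strategy --- smooth coefficients after the gauge change \eqref{Gauger-}, then the divergence theorem for the flux identity --- is the same as the paper's, which itself offers no more justification than ``the equation is hyperbolic with smooth coefficients on $\bar{\mathcal{M}}^+$, therefore we can use Leray's theorems''. Your treatment of \eqref{CCBar} (gauge invariance of the current, closedness of the $3$-form $\tilde{J}\lrcorner\,\d^4\mathrm{Vol}_g$, finite propagation speed to remove the contributions from the ends, negligibility of the crossing sphere, and the identification of $\d\sigma$ on the null pieces) is correct and consistent with what the paper intends.

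However, the step you yourself single out as the main obstacle is resolved by a false assertion. Any \emph{open} extension $\tilde{\mathcal{M}}\supset\bar{\mathcal{M}}^+$ across $\{r=r_-\}$ contains points with $r<r_-$, and such points are not in the domain of dependence of $\Sigma_0$: in the coordinates $(v,r,\omega)$ the vector $-\partial_v$ is past-directed causal (null on the horizon, timelike where $f<0$, i.e.\ for $r<r_-$), so through any point with $r\le r_-$ there passes a past-inextendible causal curve of constant $r$ along which $v\to-\infty$; it runs off towards the excised point $i^+$ without ever meeting $\Sigma_0\subset\{r_-<r<r_+\}$. This is precisely why $\{r=r_-\}$ is called a Cauchy horizon: in any such extension it \emph{is} the future Cauchy horizon $H^+(\Sigma_0)$, the boundary of the region determined by the data. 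Hence no open $\tilde{\mathcal{M}}\supset\bar{\mathcal{M}}^+$ can be globally hyperbolic with Cauchy surface $\Sigma_0$, and Leray's theorem cannot be invoked on $\tilde{\mathcal{M}}$ with data on $\Sigma_0$ alone. What is actually needed is regularity of the solution up to the boundary of the domain of dependence, i.e.\ on $D^+(\Sigma_0)\cup H^+(\Sigma_0)$. One way to get it: commute the gauge-transformed symmetric hyperbolic system with a frame of vector fields whose coefficients extend smoothly to $\bar{\mathcal{M}}^+$ (using Kruskal-type coordinates near $\scrs_{r_-}$) and run the conserved-current estimate on a foliation that stays uniformly spacelike up to the horizon, so that all derivatives admit continuous extensions to $\{r=r_-\}$; alternatively, cite the form of Leray's theory that gives smoothness on the closure of the domain of dependence. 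Your instinct to extend the manifold is sound, but the extension is where one poses the Goursat problem with data \emph{on} the horizon (as the paper indicates in its introduction), not a region where $\Sigma_0$ remains Cauchy.
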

We can choose a spin-frame that extends regularly to $\hlm$ or $\hrm$ to make \eqref{CCBar} more explicit. We consider the two normalized Newman-Penrose tetrads expressed in coordinates $(u,v,\theta,\varphi)$ and in relation to the tetrad \eqref{NPTetrad},
\[
\begin{cases} {_{_1}l^a \partial_a }&= {r^2\sqrt{2} \frac{\partial}{\partial v} = r^2\sqrt{f} l^a \partial_a\, ,}\\
{_{_1}n^a \partial_a} & =  {\frac{\sqrt{2}}{r^2f} \frac{\partial}{\partial u} = \frac{1}{r^2\sqrt{f}}n^a\partial_a\, ,}\\
{_{_1}m^a \partial_a} &= {\frac{1}{r\sqrt{2}} \left( \frac{\partial}{\partial \theta} + \frac{i}{\sin \theta} \frac{\partial}{\partial \varphi} \right) = m^a \partial_a\, ,}\\
{_{_1}\bar{m}^a \partial_a} &= {\frac{1}{r\sqrt{2}} \left( \frac{\partial}{\partial \theta} - \frac{i}{\sin \theta} \frac{\partial}{\partial \varphi} \right) = \bar{m}^a \partial_a\, ,} \end{cases} 
\hspace{0.5in} 
\begin{cases} {_{_2}l^a \partial_a }&= {\frac{\sqrt{2}}{r^2f} \frac{\partial}{\partial v} =\frac{1}{r^2\sqrt{f}}l^a\partial_a \, ,}\\
{_{_2}n^a \partial_a} & =  {r^2\sqrt{2} \frac{\partial}{\partial u} =r^2\sqrt{f}n^a\partial_a\, ,}\\
{_{_2}m^a \partial_a} &= {\frac{1}{r\sqrt{2}} \left( \frac{\partial}{\partial \theta} + \frac{i}{\sin \theta} \frac{\partial}{\partial \varphi} \right) = m^a \partial_a\, ,}\\
{_{_2}\bar{m}^a \partial_a} &= {\frac{1}{r\sqrt{2}} \left( \frac{\partial}{\partial \theta} - \frac{i}{\sin \theta} \frac{\partial}{\partial \varphi} \right) = \bar{m}^a \partial_a} \end{cases} 
\]
and the associated spin-frames $\{ \,_{_1}o^A \, ,~_{_1}\iota^A \}$ and $\{ \,_{_2}o^A \, ,~_{_2}\iota^A \}$, related to the spin-frame associated to \eqref{NPTetrad} by
\[ \,_{_1}o^A = rf^{1/4} o^A\, ,~_{_1}\iota^A = \frac{1}{rf^{1/4}} \iota^A \, ,~ \,_{_2}o^A = \frac{1}{rf^{1/4}}  o^A\, ,~_{_2}\iota^A = rf^{1/4} \iota^A \, . \] The first one extends smoothly at $\hlm$ and the second one at $\hrm$ and we have
\begin{eqnarray}
\int_{{\hlm} } \tilde{J}_a \nu^a \d \sigma &=& \int_{\R_v \times \s^2_\omega} \,_{_1}l^a \tilde{J}_a \d v \d^2 \omega =\frac{1}{\sqrt{2}}\int_{\R_v \times \s^2_\omega} \left( \vert \tilde{\phi}_0 \vert^2 + \vert \tilde{\chi}^{1'} \vert^2  \right) \d v \d^2 \omega \, , \nonumber \\
\int_{{\hrm} } \tilde{J}_a \nu^a \d \sigma &=& \int_{\R_u \times \s^2_\omega} \, _{_2}n^a \tilde{J}_a \d u \d^2 \omega = \frac{1}{\sqrt{2}}\int_{\R_u \times \s^2_\omega} \left( \vert \tilde{\phi}_1 \vert^2 + \vert \tilde{\chi}^{0'} \vert^2  \right) \d u \d^2 \omega \, , \nonumber 
\end{eqnarray}
where at each horizon, the components of the spinors are taken with respect to the spin-frame that extends smoothly there. Note that $(\tilde{\phi}_0,\, \tilde{\chi}^{1'})$ at $\hlm$ are thus the first and fourth components of the spinor $rf^{1/4}\tilde{\Upsilon}$ in the tetrad  \eqref{NPTetrad}, an equivalent statement holds for $(\tilde{\phi}_1,  \tilde{\chi}^{0'})$ at $\hrm$. We now define the following trace operators.
\begin{definition}
With the same notations as above, we define the operators
\begin{gather}
T^+_{L} \, :~ {\cal C}^\infty_0 (\Sigma_0 \, ;~ \S_A\oplus \S^{A'}) \longrightarrow {\cal C}^\infty (\hlm \, ;~ \C^2 ) \, ,~ T^+_{L} (\tilde{\alpha}_A , \tilde{\beta}^{A'}) = (\tilde{\phi}_0,\tilde{\chi}^{1'}) \vert_{\hlm} \, ,\label{Tracehlm} \\
T^+_{R} \, :~ {\cal C}^\infty_0 (\Sigma_0 \, ;~ \S_A\oplus \S^{A'}) \longrightarrow {\cal C}^\infty (\hrm \, ;~ \C^2 ) \, ,~ T^+_{R} (\tilde{\alpha}_A , \tilde{\beta}^{A'}) = (\tilde{\phi}_1,\tilde{\chi}^{0'})\vert_{\hrm} \, .\label{Tracehrm} 
\end{gather}
\end{definition}
Proposition \ref{Leray} entails the following result.
\begin{proposition} \label{ThmTO}
The trace operator $T^+_{L}$ (resp. $T^+_{R}$) extends as a partial isometry (i.e. norm-preserving but not necessarily surjective) from $\mathcal{H}_0$ into $L^2 (\hlm \, ;~ \C^2)$ (resp. $L^2 (\hrm \, ;~ \C^2)$). 
\end{proposition}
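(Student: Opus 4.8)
The plan is to establish the statement first on the dense subspace $\mathcal{C}^\infty_0(\Sigma_0\,;~\S_A\oplus\S^{A'})\subset\mathcal{H}_0$ and then extend by continuity. Fix smooth compactly supported data $(\tilde\alpha_A,\tilde\beta^{A'})$ and let $(\tilde\phi_A,\tilde\chi^{A'})$ be the associated solution of \eqref{GTDirEq}. By Proposition \ref{Leray} this solution extends as a smooth section of $\S_A\oplus\S^{A'}$ over $\bar{\mathcal M}^+$, so its restriction to $\hlm$ is smooth and $T^+_{L}(\tilde\alpha,\tilde\beta)=(\tilde\phi_0,\tilde\chi^{1'})\vert_{\hlm}$ is a well-defined smooth $\C^2$-valued function, taken in the spin-frame that extends smoothly to $\hlm$. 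The first flux formula preceding the definition identifies its $L^2(\hlm\,;~\C^2)$ norm, for the measure $\tfrac{1}{\sqrt2}\,\d v\,\d^2\omega$, with $\int_{\hlm}\tilde J_a\nu^a\,\d\sigma$, and similarly for $T^+_{R}$ on $\hrm$.

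Next I would invoke the conservation law \eqref{CCBar}, which, together with these two flux formulas, yields
\[ \Vert(\tilde\alpha,\tilde\beta)\Vert^2_{\mathcal{H}_0}=\Vert T^+_{L}(\tilde\alpha,\tilde\beta)\Vert^2_{L^2(\hlm)}+\Vert T^+_{R}(\tilde\alpha,\tilde\beta)\Vert^2_{L^2(\hrm)}\,. \]
Since both summands are non-negative, this gives at once the contraction bound $\Vert T^+_{L}(\tilde\alpha,\tilde\beta)\Vert_{L^2(\hlm)}\le\Vert(\tilde\alpha,\tilde\beta)\Vert_{\mathcal{H}_0}$, so $T^+_{L}$ is bounded on the dense subspace and extends uniquely to a bounded operator $\mathcal{H}_0\to L^2(\hlm\,;~\C^2)$ of norm at most one; the same holds for $T^+_{R}$, and the combined map $(T^+_{L},T^+_{R})$ extends to an isometry of $\mathcal{H}_0$ into $L^2(\hlm)\oplus L^2(\hrm)$.

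The remaining point is to upgrade this contraction to the partial-isometry property by exhibiting the subspace on which $T^+_{L}$ preserves the norm. The conservation law alone only controls the sum of the two fluxes, so additional decoupling is needed, and I would read it off from the comparison dynamics $e^{itH_0^+}$. Since $\Gamma^1=\mathrm{diag}(1,-1,-1,1)$ and the charge term in $H_0^+$ is a constant multiple of $\Gamma^1$, the free flow transports the components $\{1,4\}$ and $\{2,3\}$ in opposite $x$-directions, so that as $t\to+\infty$ the first pair concentrates on $\hlm$ (constant $v$) and the second on $\hrm$ (constant $u$); this matches the observation, made before the definition, that $(\tilde\phi_0,\tilde\chi^{1'})$ are precisely the first and fourth components of $rf^{1/4}\tilde\Upsilon$ at $\hlm$. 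Consequently $T^+_{L}$ coincides, up to the phase modification and change of spin-dyad carried out in Section \ref{GeometricalInterpretation}, with an isometry $\mathcal{F}$ of the $\{1,4\}$-subspace into $L^2(\hlm\,;~\C^2)$ composed with the orthogonal projection $P$ onto those components and with the inverse wave operator $\Omega^+$. Since $\Omega^+$ is unitary by Theorem \ref{ThmScattering} and $\mathcal{F}^*\mathcal{F}$ is the identity on $P\mathcal{H}$, we get $(T^+_{L})^*T^+_{L}=(\Omega^+)^*P\,\Omega^+$, which is itself an orthogonal projection; hence $T^+_{L}$ is a partial isometry, norm-preserving on $(\ker T^+_{L})^\perp$, and the argument for $T^+_{R}$ is identical with $\{2,3\}$ and $\hrm$.

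The main obstacle is precisely this last step: the elementary flux computation only delivers a contraction, and the partial-isometry statement genuinely requires identifying $T^+_{L}$ as a projection of the unitary $\Omega^+$. One must therefore either anticipate the identification of Section \ref{GeometricalInterpretation}, which is legitimate since Theorem \ref{ThmScattering} is already available, or, equivalently, solve the Goursat problem with data prescribed on $\hlm$ to show directly that $T^+_{L}$ maps $(\ker T^+_{L})^\perp$ isometrically onto a closed subspace of $L^2(\hlm\,;~\C^2)$. By contrast, the existence, smoothness and boundedness of the trace furnished by Proposition \ref{Leray} are routine.
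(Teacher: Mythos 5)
Your proposal is correct, and its first half is exactly the paper's own (one-line) proof: Proposition \ref{ThmTO} is stated as an immediate consequence of Proposition \ref{Leray}, because the conservation law \eqref{CCBar} combined with the two displayed flux identities on $\hlm$ and $\hrm$ gives precisely your identity $\Vert(\tilde\alpha,\tilde\beta)\Vert^2_{\mathcal{H}_0}=\Vert T^+_L(\tilde\alpha,\tilde\beta)\Vert^2_{L^2(\hlm)}+\Vert T^+_R(\tilde\alpha,\tilde\beta)\Vert^2_{L^2(\hrm)}$ on smooth compactly supported data, hence the contraction bound, the continuous extension of each factor, and the isometric character of the pair. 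Where you go beyond the paper's proof of this proposition is in the second half: you correctly observe that the flux identity alone only makes each individual trace operator a contraction, and that the partial-isometry property in the standard operator-theoretic sense (isometric on $(\ker T^+_L)^\perp$, equivalently $(T^+_L)^*T^+_L$ a projection) requires identifying $T^+_L$ with ${\mathcal I}^*_LP_{1,4}\tilde{\Omega}^+rf^{1/4}{\mathcal B}$. That identification is exactly the content of Proposition \ref{prop5.2} in the next subsection (and the remark following it notes that it yields ``another proof'' of the $L^2$ boundedness), so your extra step is not a different route so much as an anticipation of material the paper defers; your sketch of it (transport of the $\{1,4\}$ and $\{2,3\}$ components in opposite directions under $e^{itH_0^+}$) matches the paper's argument. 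One point worth flagging: the paper's parenthetical gloss ``norm-preserving but not necessarily surjective'' cannot be read literally for each individual operator, since the two horizon fluxes sum to the total norm and hence neither $T^+_L$ nor $T^+_R$ preserves the norm of a generic element of $\mathcal{H}_0$; under your (standard, and only tenable) reading of ``partial isometry'', the derivation from Proposition \ref{Leray} alone is incomplete and the wave-operator identification you supply is genuinely needed.
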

\begin{remark}
A similar construction can be performed at the bifurcate horizon $\{ r=r_+ \}$, but the relevant components at the left-hand and right-hand sides of the horizon are reversed from the Cauchy horizon.
\end{remark}
The main result of this section is the following :
\begin{theorem}
\label{thm5.2}
The combined trace operator
\[ T^+:=(T^+_{\mathrm{L}} \, ,~ T^+_{\mathrm{R}}) \, :~ \mathcal{H}_0 \longrightarrow L^2 (\hlm \, ;~ \C^2) \oplus L^2 (\hrm \, ;~ \C^2)\]
is an isometry (i.e. a norm-preserving isomorphism). In addition, the trace operators preserve tangent regularity; more precisely, we have for all $M,N\in \N$ and for all $\tilde{\Upsilon} \in \mathcal{H}_0$:
\begin{eqnarray}
\label{TR1}
\partial_v^M(-\Delta_{\omega})^{N/2}T^+_L\tilde{\Upsilon}=T^+_L\partial_x^M(-\Delta_{\omega})^{N/2}\tilde{\Upsilon},\\
\label{TR2}
\partial_u^M(-\Delta_{\omega})^{N/2}T^+_R\tilde{\Upsilon}=-T^+_R\partial_x^M(-\Delta_{\omega})^{N/2}\tilde{\Upsilon}.
\end{eqnarray}
\end{theorem}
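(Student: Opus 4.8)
The plan is to deduce everything from Propositions \ref{Leray}--\ref{ThmTO} and from the identification of $T^+$ with the gauge-transformed inverse wave operator $\tilde\Omega^+$ of Theorem \ref{ThmScattering}. The norm preservation is the easy part: adding the two flux formulas that precede the definition of $T^+_L, T^+_R$ to the conservation law \eqref{CCBar} gives
\[ \Vert\tilde\Upsilon\Vert^2_{\mathcal H_0} = \int_{\hlm}\tilde J_a\nu^a\,\d\sigma + \int_{\hrm}\tilde J_a\nu^a\,\d\sigma = \Vert T^+_L\tilde\Upsilon\Vert^2_{L^2(\hlm)} + \Vert T^+_R\tilde\Upsilon\Vert^2_{L^2(\hrm)} = \Vert T^+\tilde\Upsilon\Vert^2. \]
Hence $T^+$ is an isometry, in particular it is bounded below and has closed range, and it only remains to prove surjectivity and the intertwining relations.

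To prove surjectivity I would show that $T^+$ is, up to natural unitary identifications, the operator $\tilde\Omega^+$, which is unitary by Theorem \ref{ThmScattering}. I would work on the dense set of $\tilde\Upsilon$ whose associated datum $\tilde\Xi\in\mathcal H$ (via \eqref{SpinorRescaling} and the phase $e^{-i\frac{qQ}{r_-}x}$) has the form $\psi\odot W^l_n$ with $\psi\in\mathcal{C}^\infty_0$, and put $\tilde\Psi(t)=\tilde{\mathcal U}(t,0)\tilde\Xi$; then $T^+_L\tilde\Upsilon=(\tilde\Psi_1,\tilde\Psi_4)\vert_{\hlm}$ and $T^+_R\tilde\Upsilon=(\tilde\Psi_2,\tilde\Psi_3)\vert_{\hrm}$ are exactly the components that extend smoothly to the horizons. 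Since $\Gamma^1=\mathrm{diag}(1,-1,-1,1)$ and $\tilde H_0^+=\Gamma^1 D_x$, the free flow is pure transport: $(e^{-it\tilde H_0^+}\tilde\Psi(t))_1(x)=\tilde\Psi_1(t,x-t)$, and the spacetime point $(t,x-t)$ keeps $v=x$ fixed while $u\to+\infty$, so by the smooth extension of $\tilde\Psi$ to $\hlm$ (Proposition \ref{Leray}) this converges to $\tilde\Psi_1\vert_{\hlm}(v=x)$; likewise the fourth component, whereas the second and third components transport to $\hrm$ along $u=-x$. Comparing with the strong limit defining $\tilde\Omega^+$ shows that, after identifying $\hlm$ with $\Sigma$ by $v\leftrightarrow x$ and $\hrm$ with $\Sigma$ by $u\leftrightarrow -x$ and splitting $\mathcal H=P^+\mathcal H\oplus P^-\mathcal H$ along the eigenspaces of $\Gamma^1$, the operator $T^+$ coincides with $\tilde\Omega^+$. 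As $\tilde\Omega^+=e^{-i\frac{qQ}{r_-}x}\Omega^+e^{i\frac{qQ}{r_-}x}$ is unitary, $T^+$ is onto.

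The intertwining relations \eqref{TR1}--\eqref{TR2} then fall out of the same identification. The Killing field $\partial_x$ and the generators of rotations commute with $\tilde H(t)$ and $\tilde H_0^+$ (the coefficients depend on $t$ but neither on $x$ nor on $\omega$), hence with $\tilde{\mathcal U}(t,0)$ and with $\tilde\Omega^+$. Under the identification $v\leftrightarrow x$ on $\hlm$ the differentiation $\partial_x$ becomes $\partial_v$, producing the plus sign in \eqref{TR1}, while under the orientation-reversing identification $u\leftrightarrow -x$ on $\hrm$ it becomes $-\partial_u$, which is the source of the sign in \eqref{TR2}; the rotation Casimir $-\Delta_\omega$ acts identically on the slice $\Sigma$ and on the spheres foliating the horizons and therefore commutes with $T^+$, accounting for the factor $(-\Delta_\omega)^{N/2}$. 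These identities are first established on the smooth core and then extended to all of $\mathcal H_0$ by density, both sides being closed.

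The crux is the middle step: proving that the abstract $L^2$ strong limit defining $\tilde\Omega^+$ genuinely reproduces the pointwise geometric trace of the smooth solution on $\hlm\cup\hrm$. This is where I expect the real work to lie. The transport picture above is only heuristic until one controls the difference between the full and the free evolutions; I would do this exactly as in the proof of Theorem \ref{ThmScattering}, where the perturbation is $O(e^{\kappa_- t})$ and hence integrable, so that $e^{-it\tilde H_0^+}\tilde\Psi(t)$ converges in $\mathcal H$, combined with the smoothness of $\tilde\Psi$ up to the horizon from Proposition \ref{Leray}, which supplies the pointwise limit. A subsequence/almost-everywhere argument then forces the two limits to coincide, identifying $\tilde\Omega^+\tilde\Xi$ with the horizon trace and completing the proof.
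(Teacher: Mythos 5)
Your proof is correct and follows essentially the same route as the paper: the key step is the identification of $T^+$ with the inverse wave operator $\tilde{\Omega}^+$ conjugated by the rescaling $rf^{1/4}\mathcal{B}$ and the horizon identifications $\mathcal{I}_L,\mathcal{I}_R$ (this is exactly the paper's Proposition \ref{prop5.2}, whose proof you reproduce, including the transport picture for $e^{-it\tilde{H}_0^+}$ and the a.e.-subsequence argument matching the $L^2$ limit with the smooth trace from Proposition \ref{Leray}), and the intertwining relations come, as in the paper, from the commutation of $\partial_x$ and the rotations with $\tilde{H}(t)$ and $\tilde{H}_0^+$ together with the changes of variables $v=x$ on $\hlm$ and $u=-x$ on $\hrm$. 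The only cosmetic difference is in how surjectivity is concluded: the paper exhibits an explicit right inverse $\hat{T}^+$ built from the direct wave operator $\tilde{W}^+$ and checks $T^+\circ\hat{T}^+=\mathrm{Id}$ via the projection/embedding algebra, whereas you observe directly that $T^+$ factors as a composition of unitaries; the two arguments are equivalent given $\tilde{\Omega}^+\tilde{W}^+=\mathrm{Id}$ from Theorem \ref{ThmScattering}.
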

\begin{remark}
A similar result is valid at the event horizon $\{ r=r_+ \}$ giving rise to a trace operator $T^-$. We can then build the scattering matrix ${\mathcal S}=T^+{\mathcal G}(T^-)^{-1}$ which is an isometry between 
$L^2 (\hlp \, ;~ \C^2) \oplus L^2 (\hrp \, ;~ \C^2)$ and $L^2 (\hlm \, ;~ \C^2) \oplus L^2 (\hrm \, ;~ \C^2)$. Here ${\mathcal G}=e^{-i\frac{qQ}{r_-}x}e^{i\frac{qQ}{r_+}x}$ is a gauge changing operator.  
\end{remark}

\subsection{Relation to the wave operators}

We first define the unitary operator 
\begin{equation*}
{\mathcal B}: \begin{array}{rcl} {\mathcal H}_0&\rightarrow&\mathcal{H},\\ (\phi , \chi)&\mapsto&(\phi_0,\phi_1,\chi^{0'},\chi^{1'}). \end{array} 
\end{equation*}
Here the components are taken in spin-frame associated to the tetrad \eqref{NPTetrad}. There are natural diffeomorphisms between $\hlm$ resp. $\hrm$ and $\Sigma_0$ by identifying points along incoming resp. outgoing radial null geodesics. Concretely we obtain
\begin{eqnarray*}
{\mathcal I}_L&:&\begin{array}{rcl} \hlm & \rightarrow & \Sigma_0,\\ (v,\omega) & \mapsto & (x=v,\omega). \end{array}\\
{\mathcal I}_R&:&\begin{array}{rcl} \hrm & \rightarrow & \Sigma_0,\\ (u,\omega) & \mapsto & (x=-u,\omega). \end{array}
 \end{eqnarray*}
We will also need the projections $P_{ij}$ and the embeddings $E_{ij}$:
\begin{eqnarray*}
P_{ij}&:&\begin{array}{rcl} \C^4&\rightarrow&\C^2,\\(z_1,z_2,z_3,z_4)&\mapsto&(z_i,z_j),\end{array}\\
E_{14}&:&\begin{array}{rcl} \C^2&\rightarrow&\C^4,\\ (z_1,z_2)&\mapsto&(z_1,0,0,z_2),\end{array}\\
E_{23}&:&\begin{array}{rcl} \C^2&\rightarrow&\C^4,\\ (z_1,z_2)&\mapsto&(0,z_1,z_2,0).\end{array}
\end{eqnarray*}
%
%
\begin{proposition}
We have for $\tilde{\Upsilon}\in \mathcal{H}_0$
\label{prop5.2}
\begin{eqnarray}
\label{T_L}
T^+_L\tilde{\Upsilon}&=&{\mathcal I}^*_LP_{1,4}\tilde{\Omega}^+rf^{1/4}{\mathcal B}\tilde{\Upsilon},\\ 
\label{T_R}
T^+_R\tilde{\Upsilon}&=&{\mathcal I}^*_R P_{2,3}\tilde{\Omega}^+rf^{1/4}{\mathcal B}\tilde{\Upsilon},
\end{eqnarray}
where the $*$ denotes the push-forward map.
\end{proposition}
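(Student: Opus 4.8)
The plan is to reduce everything to smooth, finite–angular–momentum data, to recognise the comparison flow $e^{it\tilde H_0^+}$ as a pure transport, and then to identify the transported profile with the horizon trace by means of the boundary regularity of Proposition~\ref{Leray}.

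First I would use density and the boundedness of all the operators involved ($T^+_L,T^+_R$ are partial isometries by Proposition~\ref{ThmTO}, $\tilde\Omega^+$ is unitary, $\mathcal B$ is unitary, and $P_{1,4},P_{2,3}$ are contractions) to restrict attention to $\tilde\Upsilon$ of the form $\psi\odot W^l_{n}$ with $\psi\in\mathcal C^\infty_0(\R;\C^4)$. For such data set $\Psi_0:=rf^{1/4}\mathcal B\tilde\Upsilon$ and $\tilde\Psi(t):=\tilde{\mathcal U}(t,0)\Psi_0$; by the rescaling \eqref{SpinorRescaling} this $\tilde\Psi(t)$ is exactly $rf^{1/4}$ times the solution of the gauge–transformed equation \eqref{GTDirEq} with Cauchy data $\tilde\Upsilon$ on $\Sigma_0$. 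Writing $\xi:=\tilde\Omega^+\Psi_0$, which exists in $\mathcal H$ because $\tilde\Omega^+$ is obtained from Theorem~\ref{ThmScattering} by conjugation in the remark on gauge freedom, the definition of the inverse wave operator gives $e^{-it\tilde H_0^+}\tilde\Psi(t)\to\xi$ in $\mathcal H$ as $t\to+\infty$.

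Next I would exploit that $\tilde H_0^+=\Gamma^1 D_x$ with $\Gamma^1=\mathrm{diag}(1,-1,-1,1)$, so that $e^{it\tilde H_0^+}$ acts componentwise as the transport $x\mapsto x+t$ on the components $1,4$ and $x\mapsto x-t$ on the components $2,3$. Hence $(e^{-it\tilde H_0^+}\tilde\Psi(t))_{1,4}(x,\omega)=\tilde\Psi_{1,4}(t,x-t,\omega)$ and $(e^{-it\tilde H_0^+}\tilde\Psi(t))_{2,3}(x,\omega)=\tilde\Psi_{2,3}(t,x+t,\omega)$. Along the first family $v=(x-t)+t=x$ is the characteristic coordinate of $\hlm$, and letting $t\to+\infty$ at fixed $v$ forces the physical position $x-t\to-\infty$, i.e.\ $r\to r_-$, reaching the interior point $(v,\omega)\in\hlm$; along the second family $u=t-(x+t)=-x$ is the coordinate of $\hrm$ and the limit reaches $(u,\omega)\in\hrm$. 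This is precisely the content of the identifications $\mathcal I_L:(v,\omega)\mapsto(x=v,\omega)$ and $\mathcal I_R:(u,\omega)\mapsto(x=-u,\omega)$. By Proposition~\ref{Leray} the components $\tilde\Psi_1=rf^{1/4}\tilde\phi_0$ and $\tilde\Psi_4=rf^{1/4}\tilde\chi^{1'}$ extend smoothly up to $\hlm$, with boundary values equal by \eqref{Tracehlm} to $T^+_L\tilde\Upsilon$; therefore for each fixed $(v,\omega)$ one has the pointwise limit $\tilde\Psi_{1,4}(t,v-t,\omega)\to(T^+_L\tilde\Upsilon)(v,\omega)$. Since the left–hand side equals $(e^{-it\tilde H_0^+}\tilde\Psi(t))_{1,4}(v,\omega)$, which converges in $L^2(\Sigma_0)$ to $P_{1,4}\xi$, extracting an a.e.\ convergent subsequence and invoking uniqueness of limits yields $P_{1,4}\xi=T^+_L\tilde\Upsilon$ after the relabeling $x=v$ encoded by $\mathcal I_L^*$, which is \eqref{T_L}. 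The identical argument with components $2,3$, the surface $\hrm$, the coordinate $u=-x$ and the identification $\mathcal I_R$ gives \eqref{T_R}; no sign enters the trace identity itself (the sign in the tangent–regularity relation \eqref{TR2} only reflects $\partial_u=-\partial_x$ under $x=-u$).

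The main obstacle is bridging the two modes of convergence: $\tilde\Omega^+$ supplies only $L^2(\Sigma_0)$ convergence, whereas the trace is defined through smoothness up to the null boundary. I would reconcile them exactly as above, upgrading the pointwise boundary limit—legitimate because Proposition~\ref{Leray} gives smoothness on $\bar{\mathcal M}^+$ and the characteristic lines $v=\mathrm{const}$ reach interior points of $\hlm$—to an a.e.\ identity with the $L^2$ limit via a subsequence. A secondary check is that the measures match, so that both sides are genuine isometric images: the flux computation \eqref{CCBar} equips $L^2(\hlm;\C^2)$ with the norm $\tfrac{1}{\sqrt{2}}\int(|\cdot|^2)\,\d v\,\d^2\omega$, while $\mathcal I_L$ sends $\d x\,\d\omega$ to $\d v\,\d^2\omega$, making the partial isometry $T^+_L$ of Proposition~\ref{ThmTO} and the unitary $\tilde\Omega^+$ compatible—consistent with $T^+$ being a full isometry in Theorem~\ref{thm5.2}.
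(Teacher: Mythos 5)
Your argument is correct and follows essentially the same route as the paper's proof: reduce to smooth compactly supported data, recognise $e^{-it\tilde{H}_0^+}$ as transport along the incoming/outgoing radial null directions so that $P_{1,4}e^{-it\tilde{H}_0^+}\Psi(t)=(\Psi_1(t,x-t,\omega),\Psi_4(t,x-t,\omega))$, and use the smooth extension of the gauge-transformed solution to $\bar{\mathcal{M}}^+$ from Proposition \ref{Leray} to identify the limit with the horizon trace. You are in fact somewhat more careful than the paper on the one point it leaves implicit, namely reconciling the $L^2(\Sigma_0)$ convergence furnished by $\tilde{\Omega}^+$ with the pointwise boundary limit via an a.e.\ convergent subsequence.
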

\begin{remark}
The proposition gives in particular another proof that the trace operators extend to bounded operators between $L^2$ spaces. 
\end{remark}
\begin{proof}
We only prove \eqref{T_L}; the proof of \eqref{T_R} being analogous. It is sufficient to establish the result for $\tilde\Upsilon\in C_0^{\infty}(\Sigma_0\, ;~\S_A \oplus \S^{A'} )$. Let us consider $P_{1,4}\tilde{\Omega}^+rf^{1/4}{\mathcal B}\tilde{\Upsilon}$. This equals 
\begin{equation*}
\lim_{t\rightarrow\infty}P_{1,4}e^{-it\tilde{H}_0^+}\tilde{\mathcal U}(t,0)rf^{1/4}{\mathcal B}\tilde{\Upsilon}.
\end{equation*}
The function
\[\Psi(t)=\tilde{\mathcal U}(t,0){rf^{1/4}\mathcal B}\tilde{\Upsilon}\]
is the solution at time $t$ to the Dirac equation with potential $\tilde{A}$ and the initial data $\tilde{\Upsilon}$, rescaled by $rf^{1/4}$ and projected onto the spin-frame associated to the tetrad \eqref{NPTetrad}. 
\[P_{1,4}e^{-it\tilde{H}^+_0}\Psi(t)=(\Psi_1(t,x-t,\omega),\Psi_4(t,x-t,\omega))\] 
are the first and fourth components of this solution at time $t$, pulled back to $\Sigma_0$ along the flow of incoming radial null geodesics; it is an element of $L^2((\Sigma_0,\d x \d^2\omega);\C^2)$. As $t\rightarrow+\infty$, this tends to
$T^+_L(\tilde{\Upsilon})({\mathcal I}_L^{-1}(x,\omega)).$ 
\end{proof}
\subsection{Proof of Theorem \ref{thm5.2}}
It is sufficient to construct a right inverse for the trace operator $T^+$. We put 
\begin{gather*}
\hat{T}^+ \, :~ L^2 (\hlm \, ;~ \C^2) \oplus L^2 (\hrm \, ;~ \C^2) \longrightarrow \mathcal{H}_0 \, ,\\
\hat{T}^+:={\mathcal B}^{-1}r^{-1}f^{-1/4}\tilde{W}^+E_{1,4}({\mathcal I}_L^*)^{-1}+{\mathcal B}^{-1}r^{-1}f^{-1/4}\tilde{W}^+E_{2,3}({\mathcal I}_R^*)^{-1}.
\end{gather*} 
An easy calculation using Proposition \ref{prop5.2} then gives 
\begin{equation*}
T^+\circ\hat{T}^+=({\mathcal I}^*_LP_{1,4}E_{1,4}({\mathcal I}_L^*)^{-1}+{\mathcal I}^*_LP_{1,4}E_{2,3}({\mathcal I}_R^*)^{-1},{\mathcal I}^*_RP_{2,3}E_{1,4}({\mathcal I}_L^*)^{-1}+{\mathcal I}^*_RP_{2,3}E_{2,3}{(\mathcal I}_R^*)^{-1})=I. 
\end{equation*}
It remains to show \eqref{TR1} and \eqref{TR2}. We use Proposition \ref{prop5.2}. Let us consider \eqref{TR1}, the proof for \eqref{TR2} being analogous.  
Clearly $\Delta_{\omega}$ commutes with all operators. As for $\partial_v$ we first observe that $\partial_v{\mathcal I}^*_L={\mathcal I}^*_L\partial_x$. We then note that $\partial_x$ commutes with $\tilde{H}_0^+$ and $\tilde{\mathcal U}(t,0)$. 

\qed
 \appendix
 \section{Configurations of horizons in the anti-de Sitter case}
 
 In this short appendix we give the precise conditions on the parameters of $F$ in \eqref{DSRNMet} that determine the number of its real zeros in the case $\Lambda<0$. We use the same method as in M. Mokdad \cite{Mo1} where the case $\Lambda >0$ was treated\footnote{Note that if $F$ has three positive zeros then necessarily $\Lambda>0$.}, however, the analysis for the present case is much simpler. Let $M>0$, we multiply $F$ by $r^2$, and consider $P(r) = -\Lambda r^4 +r^2 -2Mr  +Q^2$ with $\Lambda<0$.
 
 From 
 $P'(r)=-4\Lambda r^3+2r-2M$ and  
 $P''(r)=-12\Lambda r^2 +2$,
 we see that $P''>0$ and hence $P'$ is strictly increasing over $\mathbb{R}$. Let $s$ be the only root of $P'$, and note that $s>0$. There are three cases:
 \begin{itemize}
 	\item $P$ has no roots if and only if $P(s)>0$;
 	\item $P$ has one double root if and only if $P(s)=0$;
 		\item $P$ has two simple roots if and only if $P(s)<0$.
 \end{itemize}

Now we write $P(r)=rP'(r)+T(r)$ and we have $T(r) = 3 \Lambda r^4 - r^2 + Q^2 =: Z(r^2)$. The polynomial $Z$ has positive discriminant and its two roots are
\[ \frac{1+ \sqrt{1-12\Lambda Q^2}}{6\Lambda}<0 < \frac{1- \sqrt{1-12\Lambda Q^2}}{6\Lambda} =:z^2 \, .\]
Hence $T$ has two simple roots $\pm z$ with $z>0$ if $Q\neq 0$, and one double root $z=0$ if $Q=0$. Note that $T$ is negative outside the roots and positive between them. First, assume $Q\neq0$. Since $P(s)=T(s)$, the above three cases become
\begin{itemize}
	\item $P$ has no roots if and only if $0<s<z$ which is equivalent to $P'(z)>0$,
	\item $P$ has one double root if and only if $s=z$ which is equivalent to $P'(z)=0$,
	\item $P$ has two simple roots if and only if $s>z$ which is equivalent to $P'(z)<0$,
\end{itemize}
where we have used the fact that $P'$ is strictly increasing. Note that the roots in the above cases are positive when they exist, as is obvious from the expression of $P$. Finally, if $Q=0$, $T$ is negative everywhere except at zero where it vanishes. In this case, zero is a root of $P$ and there is only one other real root, which is positive since $s>0$. The result relevant to this paper is that for $\Lambda<0$, $F$ defined in \eqref{DSRNMet} has two positive zeros if and only if $Q\ne0$ and $z-2\Lambda z^3<M$.

\end{document}